\newtheorem{theorem}{Theorem}
\newcommand{\lpp}{\mbox{$\mathsf{LP}_\mathsf{primal}$}\xspace}
\newcommand{\lpd}{\mbox{$\mathsf{LP}_\mathsf{dual}$}\xspace}
\newcommand{\opt}{\textsc{OPT}{}}
\newcommand{\Policy}{\mbox{\tt{IF}}}
\newcommand{\ExpPolicy}{\mbox{\tiny{\tt{IF}}}}
\newcommand{\EPolicy}{\mbox{\tt{EF}}}
\newcommand{\ExpEPolicy}{\mbox{\tiny{\tt{EF}}}}
\title{Optimal Resource Allocation for Elastic and Inelastic Jobs}
\author{Benjamin Berg}
\affiliation{%
    \institution{Carnegie Mellon University}
  \city{Pittsburgh, PA}
  \country{USA}}
\email{bsberg@cs.cmu.edu}
\author{Mor Harchol-Balter}
\affiliation{%
    \institution{Carnegie Mellon University}
  \city{Pittsburgh, PA}
  \country{USA}}
\email{harchol@cs.cmu.edu}
\author{Benjamin Moseley}
\affiliation{%
    \institution{Carnegie Mellon University}
  \city{Pittsburgh, PA}
  \country{USA}}
\email{moseleyb@andrew.cmu.edu}
\author{Weina Wang}
\affiliation{%
    \institution{Carnegie Mellon University}
  \city{Pittsburgh, PA}
  \country{USA}}
\email{weinaw@cs.cmu.edu}
\author{Justin Whitehouse}
\affiliation{%
    \institution{Carnegie Mellon University}
  \city{Pittsburgh, PA}
  \country{USA}}
\email{jwhiteho@andrew.cmu.edu}
\date{}
\begin{document}
\fancyhead{}

\begin{abstract}
    Modern data centers are tasked with processing heterogeneous workloads consisting of various classes of jobs.
    These classes differ in their arrival rates, size distributions, and job parallelizability.
    With respect to paralellizability, some jobs are \emph{elastic}, meaning they can parallelize linearly across many servers.
    Other jobs are \emph{inelastic}, meaning they can only run on a single server.
    Although job classes can differ drastically, they are typically forced to share a single cluster.
    When sharing a cluster among heterogeneous jobs, one must decide how to allocate servers to each job at every moment in time.
    In this paper, we design and analyze allocation policies which aim to minimize the \emph{mean response time} across jobs, where a job's response time is the time from when it arrives until it completes.

    We model this problem in a stochastic setting where each job may be elastic or inelastic.
    Job sizes are drawn from exponential distributions, but are unknown to the system.
    We show that, in the common case where elastic jobs are larger on average than inelastic jobs, the optimal allocation policy is \emph{Inelastic-First}, giving inelastic jobs preemptive priority over elastic jobs.
    We obtain this result by introducing a novel sample path argument.
    We also show that there exist cases where \emph{Elastic-First} (giving priority to elastic jobs) performs better than Inelastic-First.
    We then provide the first analysis of mean response time under both Elastic-First and Inelastic-First by leveraging recent techniques for solving high-dimensional Markov chains.
\end{abstract}

\maketitle
\section{Introduction}

\subsection{Motivation}
Modern data centers are tasked with processing astonishingly diverse workloads on a common set of shared servers \cite{verma2015large}.
These jobs differ not only in their resource requirements on a single server, but also in how effectively they scale across multiple servers \cite{delimitrou2014quasar}.
For instance, a simple client query may not be parallelizable, but it may complete in just milliseconds on a single server.
Conversely, a data intensive job may run for hours even when parallelized across dozens of servers \cite{moritz2018ray}.
The challenge facing system architects is to build data centers which, in light of this heterogeneity, achieve low \emph{response time} -- the time from when a job enters the system until it is completed.

The state-of-the-art in many data centers is to allow users to specify their own server requirements, and then over-provision the system.
By always ensuring that idle servers are available, system designers avoid having to make tough resource allocation decisions while users always receive the resources they request.
Unfortunately, these over-provisioned systems are expensive to build and waste resources \cite{gandhi2011data}.
Most large-scale data centers, for example, run at an average utilization of less than 20\% \cite{verma2015large}.

To try and reduce this waste, many cluster scheduling systems have been proposed in the literature \cite{hindman2011mesos,delimitrou2014quasar,peng2018optimus,lo2015heracles,mars2011bubble,moritz2018ray}.
These scheduling systems aim to maintain low response times without having to over-provision the system.
One way to achieve this goal \cite{delimitrou2014quasar,peng2018optimus} is to have the system scheduler determine resource allocations rather than allowing users to reserve resources.
While these schedulers often work well in practice, none of them offer theoretical response time guarantees.

\subsection{The Problem}
\label{sec:problem}
We propose a simple model of heterogeneous traffic running in a multiserver data center.
Our goal is to design a resource allocation policy which dynamically allocates servers to jobs in order to minimize the mean response time across jobs.
We assume jobs are preemptible, and that an allocation policy can change a job's server allocation over time.
In particular, we will consider a system of $k$ servers which processes jobs that arrive over time from a workload consisting of two distinct job classes.
The first class of jobs, which we call \emph{elastic}, consists of jobs which can run on any \emph{set} of servers at any moment in time.
We assume that elastic jobs experience a \emph{speedup factor} proportional to the number of servers they run on.
That is, an elastic job which completes in 2 seconds on a single server would complete in 1 second on 2 servers, or .5 seconds on 4 servers.
The second class of jobs, which we refer to as \emph{inelastic}, consists of jobs which are not parallelizable.
While an inelastic job can run on any server, it can only run on a single server at any moment in time.
A resource allocation policy must determine, at every moment in time, how to allocate servers to each job in system, both elastic and inelastic.

In practice each job also has some amount of inherent work associated with it.
This inherent work, which we call a job's \emph{size}, determines how long it would take to complete the job on a single server.
We assume that job sizes in our model are unknown to the system, but are drawn independently for each job from an exponential distribution.
To further model the heterogeneity of a workload, we allow elastic and inelastic job sizes to be drawn from two different exponential distributions, with rates $\mu_E$ and $\mu_I$ respectively.

Even given the simplicity of the model above, devising an \emph{optimal} scheduling policy is non-trivial.
For instance, consider the problem of dividing $k$ servers between one elastic job and one inelastic job which are both of size 1.
On the one hand, we know that completing jobs quickly benefits mean response time, so one might think to run the elastic job on all $k$ servers before running the inelastic job.
On the other hand, this schedule leaves $k-1$ servers idle while the inelastic job completes.
We could thus have created a more \emph{efficient} schedule by running the elastic and inelastic jobs simultaneously, giving $k-1$ servers to the elastic job and $1$ server to the inelastic job.
It turns out that the more efficient schedule is optimal in this case, but in general, a good scheduling policy must balance the trade-off between completing elastic jobs quickly and preventing long periods of low server utilization.
This question becomes even more complex if the elastic and inelastic jobs have different sizes.

\subsection{Elastic and Inelastic Jobs in the Real World}

It is common to find systems which use a shared set of servers to process both elastic and inelastic jobs.
Typically in such settings the elastic jobs have more inherent work than the inelastic jobs.
For example, consider a cluster which must process a stream of many MapReduce jobs \cite{dean2008mapreduce}.
From the cluster's point of view, this workload produces a stream of \emph{map stages} and \emph{reduce stages}.
Map stages (elastic) are designed to be parallelized across any number of servers and do a large amount of processing.
Reduce stages (inelastic) are inherently sequential and do much less total work than a map stage.
As another example, modern machine learning frameworks \cite{moritz2018ray} advocate the use of a single platform for both the training and serving of models.
Training jobs (elastic) are large, requiring large data sets and many training epochs.
Distributed training methods such as distributed stochastic gradient descent are also designed to scale out across an arbitrary number of nodes \cite{lian2017can}.
Once a model has been trained, serving the model (inelastic), which consists of feeding a computed model a single data point in order to retrieve a single prediction, is done sequentially and requires comparatively little processing power.

It is less common for elastic jobs to be smaller than inelastic jobs in practice, given the overhead involved in writing code that can be parallelized.
If the amount of inherent work required for a job is small to begin with, system developers may not choose to add the additional data structures and synchronization mechanisms that would be required to make the job elastic.
One exception is HPC workloads.
In this setting, there are often both malleable jobs (elastic) \cite{gupta2014towards} and jobs with hard requirements (inelastic).
While malleable jobs are designed to run on any number of cores, jobs with hard requirements demand a fixed number of cores.
In this case, it is unclear which class of jobs we would expect to involve more inherent work.

The model presented in this paper is flexible enough to capture all of the above examples.

\subsection{Why stochastic analysis?}
\label{sec:stochastic_why}
There has been a sizable amount of work that considers the problem of scheduling jobs onto $k$ parallel servers.
The vast majority of this work has considered \emph{only inelastic jobs of known sizes}, and has focused on worst-case analysis.
Given the optimality of the Shortest-Remaining-Processing-Time (SRPT) policy in the degenerate case where $k=1$ \cite{smith1978new}, one might hope that SRPT is also optimal in the multiserver case where $k \geq 2$.
Specifically, one might consider a policy called SRPT-k \cite{grosof2018srpt} which runs the $k$ jobs with the shortest remaining processing times at every moment in time.
Unfortunately, \cite{leonardi2007approximating} shows that SRPT-k can be arbitrarily far from optimal.
In fact, SRPT-k has a competitive ratio of $\Theta(\log{\min{(p,\frac{n}{k})}})$ where $n$ is the number of jobs and $p$ is the ratio of the maximum job size to the minimum job size.
Additionally, \cite{leonardi2007approximating} shows that this competitive ratio is a tight lower bound -- no online algorithm can do better in the worst case.  Using speed augmentation, SRPT-k is known to be constant competitive with  $1+\epsilon$ speed for any constant $\epsilon >0$ \cite{FoxM11,BussemaT06}.

More recently, some work has examined the case of scheduling \emph{parallelizable} jobs of known sizes onto $k$ parallel servers.
This work assumes that each job has an arbitrary speedup curve which dictates its running time as a function of the number of servers on which it runs. 
Again using worst-case analysis, \cite{edmonds2009scalably} shows how to achieve an $O(\frac{1}{\epsilon})$-competitive ratio using $(1+\epsilon)$-speed servers.
Without using resource augmentation, \cite{im2016competitively} provides an algorithm with a competitive ratio of $O(\log{p})$, where again $p$ is the ratio of the largest job size to the smallest job size.
This competitive ratio essentially matches the known worst-case lower bound for the problem.

The above results suggest that, without resource augmentation, there is little room to improve the worst-case performance of scheduling policies for parallelizable jobs. This is because the aforementioned lower bounds for worst-case scheduling directly apply to the case where jobs are given speedup curves.
However, from the point of view of system designers, this problem remains unsolved!
In particular, a competitive ratio of $\log{p}$ \cite{im2016competitively} can be arbitrarily high when job sizes span a wide range, which is common in practice.
Thus, a $\log{p}$-competitive algorithm could be impractical.
Additionally, the results in \cite{edmonds2009scalably} use an elegant algorithm that is interesting theoretically, but  the algorithm is difficult to implement due to frequent context switches.\footnote{The algorithm is a generalization of equipartition, splitting the system evenly amount a fraction of the jobs in the system.}
The problem is that results like \cite{edmonds2009scalably,im2016competitively} and others (see Section \ref{sec:prior})  perform badly on adversarial cases which are uncommon in practice.
We therefore propose shifting to \emph{stochastic analysis} which discounts the impact of these adversarial cases.  By considering a stochastic analysis, there is the potential to reveal new algorithmic insights into the problem. 
It could even be possible to find online algorithms that are optimal in expectation.

There has been recent work aimed at allocating servers to parallelizable jobs in a stochastic setting in order to minimize mean response time \cite{berg2018towards}.
However, this line of work is in an early stage.
Specifically, \cite{berg2018towards} only considers the case where all jobs are homogeneous with respect to job size and job speedup.
While \cite{berg2018towards} is able to derive the optimal policy in this simpler case, they explicitly note the complexity of handling even just two different classes of jobs.
In particular, the problem of allocating to servers to both elastic and inelastic jobs in a stochastic setting remains completely open.
Although \cite{berg2018towards} presents some approximate numerical analysis of the case where jobs are heterogeneous, the techniques used are computationally intensive and offer no guarantess of accuracy.

\subsection{Our Contributions}

This paper addresses the problem of allocating servers to both elastic and inelastic jobs.
Section \ref{sec:model} introduces our stochastic model of elastic and inelastic jobs of unknown sizes which arrive over time to a system composed of $k$ servers.
Using this model, we then present the following results:
\begin{itemize}
    \item We propose two natural server allocation policies which aim to minimize the mean response time across jobs.
    First, the \emph{Elastic-First} policy gives strict preemptive priority to elastic jobs and aims to minimize mean response time by maximizing the rate at which jobs depart the system.
    Second, the \emph{Inelastic-First} policy gives strict preemptive priority to inelastic jobs.
    By deferring elastic work for as long as possible, Inelastic-First maximizes system efficiency.
    It is not immediately obvious if either of these policies is optimal, or which policy is better.
    \item We show in Section \ref{sec:equal_rates} that if elastic and inelastic jobs follow the same exponential size distribution, Inelastic-First is optimal with respect to mean response time.
    This argument uses precedence relations to show that deferring elastic work increases the long run efficiency of the system.
        
    \item Next, in Section~\ref{sec:geq_rates}, we show that in the case where elastic jobs are larger on average than inelastic jobs, Inelastic-First is optimal with respect to mean response time.
    This requires the introduction of a novel sample path argument.
    Our key insight is that Inelastic-First minimizes the expected amount of inelastic work in the system as well as the expected \emph{total} work in the system.
    As long as elastic jobs are larger than inelastic jobs on average, this suffices for minimizing mean response time.
    
    \item In the case where elastic jobs are smaller on average than inelastic jobs, Inelastic-First is no longer optimal.
    We illustrate this via a counterexample in Section \ref{sec:less} which shows that Elastic-First can outperform Inelastic-First.
    In order to determine when Elastic-First outperforms Inelastic-First, we perform the first analysis of both the Elastic-First and Inelastic-First allocation policies in Section \ref{sec:analysis}.
    This analysis leverages recent techniques for solving high-dimensional Markov chains.
    Our analytical results match simulation.
        
    \item For the sake of completeness, we also consider the case where job sizes are known and jobs arrive at time $0$. Here we use worst-case analysis. Using standard dual-fitting techniques (e.g.\ \cite{AnandGK12,AngelopoulosLT19}), we show SRPT-k is a 4-approximation for the objective of minimizing mean response time. This demonstrates the need for stochastic modeling and analysis. Indeed, the stochastic setting yields optimality results without resorting to approximations.
        Due to lack of space, this final contribution is saved for the Appendix \ref{sec:4_approx}.
\end{itemize}

\section{Our Model}
\label{sec:model}
We consider a model where jobs arrive over time to a system of $k$ identical servers.
Each job has an associated amount of inherent work which we refer to as the \emph{job size}.
We assume that each of the $k$ servers processes jobs with a rate of 1 unit of work per second.
Hence, a job's size is equal to its running time on a single server.
We assume that job sizes are unknown to the system, and are drawn from exponential distributions.

Each job may be either \emph{elastic} or \emph{inelastic}.
We assume that elastic jobs arrive according to a Poisson process with rate $\lambda_E$, and that elastic job sizes are drawn independently from an exponential distribution with rate $\mu_E$.
Similarly, inelastic jobs arrive independently according to a Poisson process with rate $\lambda_I$, and inelastic job sizes are drawn independently from an exponential distribution with rate $\mu_I$.
We let $S_E$ and $S_I$ be random variables representing the initial sizes of an elastic job or an inelastic job respectively.

Every elastic job can run on \emph{any number} of servers at any moment in time.
Because each server processes work at rate 1, $n$ servers process work at a rate of $n$ units of work per second.
Hence, 
\begin{quote}\textbf{an elastic job of size $x$ completes in $x$ seconds on a single server but completes in $\frac{x}{n}$ seconds on $n$~servers.}\end{quote}
By contrast, inelastic jobs can run on \emph{at most one} server at any moment in time.

We note that all of the results presented in this paper hold equally if inelastic jobs can run on up to some fixed number of servers, $C$.
If $C \geq k$, we there is effectively no difference between elastic and inelastic jobs, since we can never allocate more than $k$ servers in total.
If $C < k$, we can simply renormalize our allocation policies to consider allocating in units of $\frac{k}{C}$ servers.
After renormalizing, inelastic jobs can once again receive up to one unit of allocation while elastic jobs can receive any number of units of allocation.
While our results do not depend on the value of $C$, we consider the case where $C=1$ for the sake of simplifying our notation.

An \emph{allocation policy}, $\pi$, must determine how many servers to allocate to each job at any moment in time $t$.
Specifically, $\pi$ can increase or decrease the allocation to a particular job as it runs.
We assume that servers are capable of time sharing, and thus an allocation policy may allocate a fractional number of servers to any job.
For any $n \in \mathbb{R}_{\geq 0}$, we assume that an allocation of $n$ servers processes work at a rate of $n$ units of work per second.
At any moment in time, $t$, an allocation policy can allocate at most 1 server to each inelastic job, and at most $k$ servers in total.

We can model this system under any policy $\pi$ as a continuous time Markov chain where each state denotes the number of elastic and inelastic jobs currently in the system.
That is, we define a continuous time Markov process $\{(N^\pi_I(t),N^\pi_E(t))\colon t\ge 0\}$ where
$$(N^\pi_I(t), N^\pi_E(t)) \in \mathbb{Z}^2_{\geq 0}, \qquad \forall t \geq 0.$$
Here, we define $N^\pi_I(t)$ to be the number of inelastic jobs in system at time $t$, and we define $N^\pi_E(t)$ to be the number of elastic jobs in system at time $t$.
We therefore let the state $(N^\pi_I(t), N^\pi_E(t))=(i,j)$ denote that there are $i$ inelastic jobs and $j$ elastic jobs currently in the system.

Because job sizes are exponential and arrivals occur according to a Poisson process, at any moment in time $t$, the distributions of remaining job sizes and the distributions of
times until the next arrival for each job class can be fully specified by the numbers of inelastic jobs and elastic jobs in the system.
Hence, we will only consider policies which are \emph{stationary} and \emph{deterministic}, meaning the policy $\pi$ makes the same allocation decision at every time $t$, given that the system is in state $(i,j)$.
Specifically, we define $\pi_I(i,j)$ to be the number of servers allocated to inelastic jobs in state $(i,j)$ under policy $\pi$, and we define $\pi_E(i,j)$ to be the number of servers allocated to elastic jobs in state $(i,j)$ under policy $\pi$.
Note that 
$$\pi_I(i,j) \leq i \qquad \forall (i,j) \in \mathbb{Z}_{\geq0}^2,$$
$$\pi_E(i,j) \leq k \cdot \mathds{1}_{\{j > 0\}}\qquad \forall (i,j) \in \mathbb{Z}_{\geq0}^2,$$
and
$$\pi_I(i,j) + \pi_E(i,j) \leq k \qquad \forall (i,j) \in \mathbb{Z}_{\geq0}^2.$$
In general, $\pi_I(i,j) + \pi_E(i,j)$ could be less than $k$ if there are not a sufficient number of jobs to use all $k$ servers, or if $\pi$ chooses to idle servers instead of allocating them to an eligible job.

We refer to a policy $\pi$ as \emph{work conserving} if and only if, in any state $(i,j)$,
$$\pi_I(i,j) + \pi_E(i,j) \geq i,$$
and
$$\pi_I(i,j) + \pi_E(i,j) = k \cdot \mathds{1}_{\{j > 0\}}.$$
That is, $\pi$ never leaves servers idle if there is an eligible job in the system.
In Appendix \ref{sec:idle} we show that there exists an optimal policy which is also work conserving.
It therefore suffices to only consider work conserving policies throughout our analysis.

We define the \emph{system load}, $\rho$ to be
\begin{equation}\label{eq:stable}\rho \equiv \frac{\lambda_I}{k \mu_I} + \frac{\lambda_E}{k \mu_E}.\end{equation}
In Appendix \ref{sec:stable} we show that for any work conserving policy, $\pi$, $(N^\pi_I(t), N^\pi_E(t))$ is an ergodic Markov chain if $\rho < 1$.
Because there exists an optimal work conserving policy, \eqref{eq:stable} is necessary for stability under \emph{any} policy $\pi'$.  
We therefore only consider the regime where $\rho < 1$.

We will track several stochastic quantities in our system.
We define the total number of jobs in the system, $N^\pi(t)$, as
$$N^{\pi}(t) = N^\pi_I(t) + N^\pi_E(t).$$
We also define $W^{\pi}(t)$ to be the \emph{total work} in the system under policy $\pi$ at time $t$, where total work is the sum of the remaining sizes of all jobs in the system.
Similarly, we let $W^{\pi}_E(t)$ and $W^{\pi}_I(t)$ be the \emph{total elastic work} and the \emph{total inelastic work} in the system under policy $\pi$ at time $t$.
These quantities are the sums of the remaining sizes of all elastic or inelastic jobs respectively.
When referring to the corresponding steady-state quantities, we omit the argument $t$.

We define the random variable $T^\pi$ to be the \emph{response time} of a job which arrives to the system in steady-state under policy $\pi$.
Here, the response time of a job is the time from when the job arrives until it is completed (i.e. its remaining size is 0).
Our goal is to find the policy which minimizes the \emph{mean response time}.
%That is, we want to find the policy $\pi^*$ such that
%$$\pi^*= \argmin_\pi \mathbb{E}[T^\pi]$$

We will investigate the performance of two allocation policies, \emph{Elastic-First} (\EPolicy{}) and \emph{Inelastic-First} (\Policy{}).
\EPolicy{} gives strict preemptive priority to elastic jobs, and processes jobs in first-come-first-serve (FCFS) order within each job class.
That is, in any state $(i,j)$ where $j>0$, \EPolicy{} allocates all $k$ servers to the elastic job with the earliest arrival time.
In any state $(i,j)$ where $j=0$, \EPolicy{} allocates one server to each inelastic job, in FCFS order, until either all jobs have received a server or all $k$ servers have been allocated.
By contrast, \Policy{} gives strict preemptive priority to inelastic jobs while processing jobs in FCFS order within each job class.
Under \Policy{}, in any state $(i,j)$ where $i<k$, one server is allocated to each inelastic job and the remaining $k-i$ servers are allocated to the elastic job with the earliest arrival time if there is one.
In any state $(i,j)$ where $i\geq k$, all $k$ servers are allocated to the inelastic jobs with the $k$ earliest arrival times.

\section{Prior Work}
\label{sec:prior}
Although many real-world systems are tasked with allocating servers to heterogeneous workloads, these systems do not allocate servers optimally in order to minimize the mean response time across jobs.
Most large-scale cluster schedulers allow \emph{users} to explicitly reserve the number of servers they want \cite{verma2015large,hindman2011mesos,mars2011bubble,lo2015heracles,moritz2018ray}, only allowing the system to choose the placement of each job onto its requested number of servers.
Some systems have proposed allowing the \emph{system} to determine the number of servers allocated to each job \cite{delimitrou2014quasar,peng2018optimus,liaw2019hypersched} in order to reduce response times.
However, these systems rely on heuristics and do not make theoretical guarantees.

In the theoretical literature, the closest work to the results presented in this paper come from the stochastic performance modeling community.
In particular, \cite{berg2018towards} develops a model of jobs whose sizes are drawn from an exponential distribution and which receive a sublinear speedup from being allocated additional servers.
However, \cite{berg2018towards} only provides optimality results when jobs are homogeneous, following a \emph{single} speedup function and a \emph{single} exponential size distribution.
We emphasize that our paper is the first ever to consider more than one speed-up curve in the setting with stochastic arrivals over time and stochastic job sizes. 
Essentially all other work in the stochastic community has considered non-parallelizable inelastic jobs.
Much of the prior work has been limited to scheduling jobs on a single server \cite{conway2003theory}.
While there has certainly been work on scheduling in stochastic multiserver systems (e.g \cite{JACM02,grosof2018srpt,gupta2007insensitivity,BachmatSarfati08,AHW94,Sigmetrics09a}), this literature assumes that a job occupies at most one server at a time (that is, all jobs are inelastic).
One notable model that considers jobs that run on multiple servers is the queueing model motivated from MapReduce \cite{kim1989analysis,wang2019delay,nelson1988approximate}.
This work assumes that each job consists of a set of pieces that can be processed on different machines at the same time.
These pieces can be processed in any order and, critically, a job only completes when all of its pieces have completed.
This model can only be analyzed exactly when the number of servers is $k=2$.

In the worst case setting, the problem of scheduling jobs on identical parallel servers was introduced in \cite{mcnaughton1959scheduling} and has been considered extensively.
However, in the classical version of the problem, all jobs are considered to be inelastic.
Given inelastic jobs with known sizes and known release times, \cite{leonardi2007approximating} shows a tight lower bound on the competitive ratio of $\Theta(\log{\min{(p,\frac{n}{k})}})$ where $n$ is the number of jobs and $p$ is the ratio of the maximum job size to the minimum job size.
The policy which achieves the best competitive ratio is SRPT-k, which at every moment schedules the $k$ jobs with the smallest remaining processing times.

Several prior works have also considered scheduling parallelizable jobs in the worst-case setting.
The speed-up curve model was first addressed by \cite{Edmonds00}.
The best result for mean response time is \cite{edmonds2009scalably} which gave a constant competitive algorithm with minimal speed augmentation.
This paper introduced the influential LAPS scheduling algorithm that has been used in a variety of settings \cite{GuptaIKMP12,EdmondsIM11}.
The work of \cite{im2016competitively} considers the problem without speed augmentation and gives a $O(\log p)$ competitive algorithm with mild assumptions on the speed-up curves.
Recently, there has been a line of work on the Directed-Acyclic-Graph  (DAG) model for parallelism.  Here a constant competitive algorithm with $1+\epsilon$ speed augmentation is known \cite{AgrawalLLM16}.
The work of \cite{AgrawalL0LM19} gave an $O(1)$ speed constant $O(1)$ competitive algorithm for mean response time that is practical, using minimal preemptions.
Note, however, that the best possible competitive ratio in any model with release times is still lower bounded by $\Theta(\log{\min{(p,\frac{n}{k})}})$, since all jobs could be inelastic in the worst case.

\section{Optimality Results}
\label{sec:opt}
The following sections establish two results.
First, we show that if $\mu_I \geq \mu_E$, then \Policy{} is optimal for minimizing mean response time.
Second, we show that if $\mu_I < \mu_E$, then \Policy{} is not necessarily optimal.

In Section~\ref{sec:equal_rates}, we consider the special case where $\mu_I = \mu_E$. In this case where we have homogeneous sizes, analysis is particularly easy. Unfortunately, the technique used to demonstrate optimality, which is based on the notion of precedence relations in continuous time Markov chains, does not extend to when $\mu_I \neq \mu_E$.

In Section~\ref{sec:geq_rates}, we consider the case where $\mu_I \geq \mu_E$. Here, we consider a novel sample path argument which allows us to demonstrate the optimality of \Policy{}.

Lastly, in section \ref{sec:less}, we consider the case where $\mu_I < \mu_E$. Here, we construct a very simple example demonstrating that \Policy{} is not optimal in this environment. Furthermore, in this example, we show the policy \EPolicy{} actually outperforms \Policy. We do not know what policy is optimal in this regime.

\subsection{Optimality when $\mu_I = \mu_E$}
\label{sec:equal_rates}

We first consider the case where $\mu_I = \mu_E$.
In this case, \Policy{} is optimal with respect to minimizing mean response time. 
As stated in Section~\ref{sec:problem}, the optimal policy should balance the trade-off between completing jobs quickly and preserving system efficiency.
When $\mu_I = \mu_E$, \Policy{} maximizes system efficiency without reducing the overall completion rate of jobs.
We argue this formally in Theorem \ref{thm:equal} by leveraging a result from \cite{berg2018towards}.

\begin{theorem}
	\label{thm:equal}
    \Policy{} is optimal with respect to minimizing mean response time when $\mu_I = \mu_E$.
\end{theorem}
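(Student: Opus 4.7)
The plan is to reduce this case to the precedence-relation machinery of \cite{berg2018towards}. The key simplification when $\mu_I = \mu_E$ is that jobs of the two classes have statistically identical sizes, so the only distinguishing feature between an elastic and an inelastic job is the constraint on how many servers it can use. Combined with Little's law, this reduces the objective of minimizing $E[T^\pi]$ over work-conserving policies to that of minimizing the steady-state expected total number of jobs $E[N^\pi]$.

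First I would observe that in any state $(i, j)$, every work-conserving policy uses exactly the same number of busy servers, namely $k$ if $j \geq 1$ or $i \geq k$, and $i$ otherwise. Therefore the only way a policy can ``waste'' capacity is by spending time in states of the form $(i, 0)$ with $i < k$. I would then argue that $\Policy$ uniquely minimizes the expected time spent in such states by deferring elastic work maximally: since $\Policy$ only draws on elastic jobs when strictly necessary (i.e., when $i < k$), it accumulates the largest possible elastic reserve, which acts as a buffer that prevents the system from falling into a low-throughput $(i, 0)$ state.

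To formalize this, I would construct a sample-path coupling between the $\Policy$ system and an arbitrary work-conserving alternative $\pi$, using a shared arrival stream and a shared uniformized completion clock at rate $k\mu$. Within this coupling, I would invoke the precedence relation from \cite{berg2018towards} on the joint state space: if the state under $\Policy$ precedes the state under $\pi$ in the partial order reflecting ``more elastic reserve and no more total jobs,'' then that precedence is preserved by every event type---elastic arrival, inelastic arrival, and completion-clock tick. An induction on events then yields $N^{\Policy}(t) \leq N^\pi(t)$ along coupled sample paths, from which the theorem follows by taking expectations and applying Little's law.

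The main obstacle is verifying that the precedence relation is preserved under completion events when the two coupled systems occupy different states. The delicate case is when the $\Policy$ system has at least one elastic job while $\pi$ has drained its elastic reserve and sits in a state $(i,0)$ with $i < k$: a completion-clock tick then corresponds to a real completion for $\Policy$ with probability $1$ but only with probability $i/k$ for $\pi$, so the gap $N^\pi(t) - N^{\Policy}(t)$ grows in expectation at this moment. Carefully accounting for this asymmetry---and showing that it never reverses---is precisely the precedence-based argument of \cite{berg2018towards}, which we can invoke here because the homogeneity assumption $\mu_I = \mu_E$ ensures that the two classes are otherwise identical.
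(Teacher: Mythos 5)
Your overall plan---couple \Policy{} with an arbitrary work-conserving $\pi$ under a shared arrival stream and a uniformized completion clock, prove a pathwise invariant by induction on events, and finish with Little's law---is genuinely different from the paper's proof, which is a short reduction: when $\mu_I=\mu_E$ every non-idling policy maximizes the total departure rate in every state, \Policy{} additionally minimizes the elastic allocation among such policies, and the GREEDY/GREEDY* result of \cite{berg2018towards} plus a separate exchange argument for idling policies then gives optimality. A coupling proof can be made to work, but the inductive invariant you chose, ``more elastic reserve and no more total jobs'' (i.e.\ $N_E^{\ExpPolicy{}}(t)\ge N_E^{\pi}(t)$ and $N^{\ExpPolicy{}}(t)\le N^{\pi}(t)$), is not preserved by the dynamics, and this is a genuine gap. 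Take $k=2$ and the reachable configuration in which \Policy{} holds one elastic job in state $(0,1)$ while $\pi$ holds one inelastic and one elastic job in state $(1,1)$ and serves them one server each. (Reachability: jobs $I_1,E_1$ arrive; \Policy{} completes $I_1$ on a tick on which $\pi$, running $E_1$ on both servers, completes $E_1$; \Policy{} then finishes $E_1$ while $\pi$ works on $I_1$ alone; a new elastic job arrives.) On the next tick \Policy{} completes its elastic job with probability $1$, while $\pi$ completes its inelastic job with probability $1/2$, so under \emph{any} coupling you land with probability at least $1/2$ in $(0,0)$ versus $(0,1)$, where $\pi$ now has strictly more elastic reserve. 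Once the invariant is lost the induction has nothing to stand on, and the pathwise conclusion itself is then in jeopardy: a few inelastic arrivals put \Policy{} in a low-throughput state $(m,0)$ with $m<k$ busy servers while $\pi$ keeps all $k$ servers busy on its surviving elastic job, and unless the coupling is engineered to forbid it, $\pi$ can equalize and then overtake, violating $N^{\ExpPolicy{}}(t)\le N^{\pi}(t)$. Deferring this to ``precisely the precedence-based argument of \cite{berg2018towards}'' does not help: as the paper itself notes, that argument compares states within the GREEDY class via value functions and yields optimality only among GREEDY policies; it supplies neither a two-system coupling invariant nor, on its own, the reduction to work-conserving policies that you assume at the outset.

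The repair is to replace ``more elastic reserve'' by ``no more \emph{inelastic} jobs'': carry the pair of invariants $N_I^{\ExpPolicy{}}(t)\le N_I^{\pi}(t)$ and $N^{\ExpPolicy{}}(t)\le N^{\pi}(t)$. The first is preservable because \Policy{} allocates $\min(i,k)$ servers to inelastic jobs, the maximum any policy can, so whenever the inelastic counts are equal you may nest the event ``$\pi$ completes an inelastic job'' inside the event ``\Policy{} completes an inelastic job.'' The two invariants together exclude the only configuration that threatens the second one, namely equal job counts $n<k$ with \Policy{} out of elastic jobs but $\pi$ not, since that would force $N_I^{\ExpPolicy{}}(t)=n>N_I^{\pi}(t)$; hence at equal job counts \Policy{} always has at least as many busy servers as $\pi$ and the ``$\pi$ completes something'' event can be nested inside ``\Policy{} completes something.'' One checks that the two nesting requirements are simultaneously satisfiable, and the induction then closes. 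With that substitution, and an appeal to the paper's Appendix~\ref{sec:idle} to justify restricting to work-conserving $\pi$, your argument becomes a correct and self-contained alternative to the paper's reduction.
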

\begin{proof}
    Consider the server allocations made by a policy $\pi$ in any state $(i,j)$.
    We define the \emph{total rate of departures} under $\pi$ in the state $(i,j)$ to be
    $$d^\pi(i,j) = \pi_E(i,j)\cdot \mu_E + \pi_I(i,j)\cdot\mu_I.$$
    Following the terminology of \cite{berg2018towards}, we say that $\pi$ is in the class of \emph{GREEDY} policies if
    $$d^\pi(i,j) = \max_{\pi'} d^{\pi'}(i,j) \qquad \forall (i,j) \in \mathbb{Z}^2_{\geq0}.$$
    That is, a policy is in GREEDY if it achieves the maximal rate of departures in every state.

    Furthermore, \cite{berg2018towards} defines a class of policies called \emph{GREEDY*}.
    A policy is said to be in GREEDY* if, in every state $(i,j)$, it minimizes the number of servers allocated to elastic jobs while still maximizing the total rate of departures.
    That is, a policy $\pi$ is in GREEDY* iff
    $$\pi_E(i,j) = \min_{\pi' \in GREEDY} \pi_E'(i,j) \qquad \forall (i,j) \in \mathbb{Z}^2_{\geq 0}.$$
    
    It is shown in \cite{berg2018towards}, using precedence relations, that for any policy $\pi \in \mbox{GREEDY}^*$
    \begin{equation}
        \label{eq:gstar}
    \mathbb{E}[T^\pi] = \min_{\pi' \in GREEDY} \mathbb{E}[T^{\pi'}].\end{equation}

    To leverage this result, we note that when $\mu_I = \mu_E$ in our model, a policy is in GREEDY if and only if it does not idle servers unnecessarily.
    
    We now argue that \Policy{}, which is non-idling, must be in GREEDY*. In states where \Policy{} allocates zero servers to elastic jobs, $\Policy{}_E(i, j)$ is clearly minimal.
    In any state $(i,j)$ where $\Policy{}_E(i,j) > 0$, servers cannot be reallocated from elastic jobs to inelastic jobs, since all $i$ inelastic jobs must already be in service.
    Hence, reducing $\Policy{}_E(i,j)$ in this case results in a policy which is not in GREEDY.
    $\Policy{}_E(i,j)$ is therefore minimal amongst GREEDY policies in any state $(i,j)$, and $\Policy{}$ is in GREEDY*.
    
    We show in Appendix \ref{sec:idle} that there exists an optimal policy which is non-idling.
    Hence, when $\mu_I = \mu_E$, there is an optimal policy in GREEDY.
    This implies that there must be an optimal policy in GREEDY* as well.
    Because any policy in GREEDY* has the same rate of departures of elastic and inelastic jobs in every state $(i,j)$, every policy in GREEDY* has the same mean response time.
    Thus, \Policy{}, which is in GREEDY*, is optimal with respect to mean response time.
    
\end{proof}

\noindent \textbf{Why the prior argument does not generalize}\\
Unfortunately, the results of \cite{berg2018towards} do not extend to the case where $\mu_I \neq \mu_E$.
In particular, the proof of \eqref{eq:gstar} uses a precedence relation between any two states $(i,j-1)$ and $(i-1,j)$.
This claim essentially states that a policy $\pi$ in state $(i,j)$ would perform better by transitioning to state $(i-1,j)$ than it would by transitioning to state $(i,j-1)$.
In the case where $\mu_I = \mu_E$, this makes perfect intuitive sense.
In this case, both states $(i-1,j)$ and $(i,j-1)$ contain the same amount of expected total work.
Hence, it is better to be in state $(i-1,j)$, which benefits from having an additional elastic job.
Consider how this intuition changes when $\mu_I > \mu_E$.
In this case, state $(i,j-1)$ has less expected total work, but state $(i-1,j)$ has more expected elastic work.
It turns out that the precedence relation shown in \cite{berg2018towards} no longer holds when $\mu_I \neq \mu_E$.
Moreover, even if the precedence relations were to hold when $\mu_I > \mu_I$, \cite{berg2018towards} would yield that GREEDY* is optimal amongst GREEDY policies, not optimal amongst all policies.
We must therefore devise a new argument to reason about the optimal allocation policy when elastic and inelastic jobs follow different size distributions.

\subsection{Optimality when $\mu_I \geq \mu_E$}
\label{sec:geq_rates}
We will show \Policy{} is optimal in the more general case of $\mu_I \geq \mu_E$.  While our goal is to minimize mean response time, we note that via Little's Law \cite{harchol2013performance}, it suffices to minimize the mean total number of jobs in the system. \footnote{Little's Law states that for any ergodic system with average total arrival rate $\lambda$, the mean response time, $\mathbb{E}[T]$ is related to the mean total number of jobs in system, $\mathbb{E}[N]$ via the formula 
$$\mathbb{E}[T]=\frac{\mathbb{E}[N]}{\lambda}.$$}

First, we start by defining a class of policies $\mathcal{P}$ which serve inelastic jobs on a first-come-first-serve (FCFS) basis; elastic jobs can be served in any order. In more detail, a policy $\pi$ is said to be in class $\mathcal{P}$ if the following hold true:
\begin{enumerate}
    \item $\pi$ is work-conserving. 
    \item $\pi$ serves inelastic jobs in FCFS order. In particular, if $\pi$ allocates $N$ servers to inelastic jobs at time $t$ ($N$ may be fractional, and there may be more than $N$ inelastic jobs in the systems), the allocation must give $\lfloor N\rfloor$ servers to the $\lfloor N\rfloor$ inelastic jobs with the earliest arrival times. If there is a remaining fraction of a server, it may then be allocated to the inelastic job with the next earliest arrival time. 
\end{enumerate}

Clearly, $\Policy{} \in \mathcal{P}$. 

\noindent \textbf{Road map:}
Theorem~\ref{equiv_fcfs} argues that we only need to compare \Policy{} to policies in $\mathcal{P}$. Specifically, $\mathcal{P}$ contains some optimal policy that minimizes the mean number of jobs in system and mean response time.

Next, in Theorem~\ref{sample_path_opt} we present a novel sample path argument which shows that \Policy{} has stochastically less work in the system than any policy in $\mathcal{P}$. We will directly leverage this fact to show that, out of all policies $\pi \in \mathcal{P}$, \Policy{} has the least expected inelastic work in system and also the least expected total work in system.

Finally, In Theorem~\ref{opt_fcfs} we show that, of all policies in $\mathcal{P}$, \Policy{} minimizes the expected number of jobs in system. Thus, by Little's Law, \Policy{} is optimal with respect to mean response time.\\

\noindent\textbf{Analysis.}
We now present Theorem \ref{equiv_fcfs}. %In short, what this theorem tells us is that there will be an optimal policy, both respect to mean response time and mean number of jobs in system, which belongs to $\mathcal{P}$. 
\begin{theorem} 
	\label{equiv_fcfs}
	 The class $\mathcal{P}$ contains a policy $\pi$ which minimizes both mean response time and mean number of job in system. Specifically 
	\begin{align*}
	\mathbb{E}\left[N^{\pi}\right] = \min_{\pi'}\left\{\mathbb{E}\left[N^{\pi'}\right]\right\},
	\end{align*}
	and
	\begin{align*}
	    \mathbb{E}\left[T^{\pi}\right] = \min_{\pi'}\left\{ \mathbb{E}\left[T^{\pi'}\right]\right\},
	\end{align*}
    where $N^\pi$ is the total number of jobs in the system in steady-state under policy $\pi$, and $T^\pi$ is the response time of a job in the system under $\pi$ in steady-state.
\end{theorem}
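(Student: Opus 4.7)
My plan is to prove Theorem~\ref{equiv_fcfs} by an exchange (relabeling) argument that converts any optimal policy into a policy in $\mathcal{P}$ without changing the distribution of $(N_I(t), N_E(t))$. The appendix result on idling gives us an optimal policy $\pi^*$ that is work-conserving, so I may focus on condition (2) of the definition of $\mathcal{P}$: serving inelastic jobs in FCFS order. The intuition is that because inelastic job sizes are i.i.d.\ $\mathrm{Exp}(\mu_I)$ and their sizes are unknown to the scheduler, the system cannot distinguish one unfinished inelastic job from another; permuting the identities of which inelastic jobs receive service does not affect the joint law of the counting process $\{(N_I(t), N_E(t)) : t \ge 0\}$.

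The key steps, in order, are as follows. First, I would appeal to Appendix~\ref{sec:idle} to restrict attention to a work-conserving optimal policy $\pi^*$. Second, I would define a new policy $\pi$ whose total allocation to elastic jobs in state $(i,j)$ equals $\pi^*_E(i,j)$ and whose total allocation to inelastic jobs equals $\pi^*_I(i,j)$, but which distributes the inelastic allocation in FCFS order (giving $\lfloor \pi^*_I(i,j)\rfloor$ full servers to the earliest-arrived inelastic jobs and the fractional remainder to the next one). By construction $\pi \in \mathcal{P}$. Third, I would establish the distributional equivalence
\begin{equation*}
\bigl(N_I^{\pi}(t), N_E^{\pi}(t)\bigr) \stackrel{d}{=} \bigl(N_I^{\pi^*}(t), N_E^{\pi^*}(t)\bigr), \qquad \forall t \ge 0,
\end{equation*}
via a coupling that shares the arrival processes and uses the memoryless property of the exponential distribution. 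Specifically, in both systems the instantaneous departure rate of inelastic jobs in state $(i,j)$ is $\pi^*_I(i,j)\cdot \mu_I$, regardless of which particular inelastic jobs are receiving service, and likewise for elastic jobs. Thus both processes are continuous-time Markov chains with identical generators on the state space $\mathbb{Z}_{\ge 0}^2$, so they have the same transient and stationary distributions.

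Fourth, the above equivalence immediately yields $\mathbb{E}[N^{\pi}] = \mathbb{E}[N^{\pi^*}]$ in steady state, where $N^{\pi} = N_I^{\pi} + N_E^{\pi}$. Since both policies see the same total arrival rate $\lambda_I + \lambda_E$, Little's Law gives
\begin{equation*}
\mathbb{E}[T^{\pi}] \;=\; \frac{\mathbb{E}[N^{\pi}]}{\lambda_I + \lambda_E} \;=\; \frac{\mathbb{E}[N^{\pi^*}]}{\lambda_I + \lambda_E} \;=\; \mathbb{E}[T^{\pi^*}].
\end{equation*}
Since $\pi^*$ was assumed optimal for both objectives, $\pi \in \mathcal{P}$ achieves the same optimal values, proving the theorem.

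The main obstacle will be making the coupling rigorous: even though sizes are unknown to the scheduler, the two policies track different individual jobs, so the mapping between "job in $\pi^*$-system" and "job in $\pi$-system" must be defined carefully so that the remaining-size distributions agree at every decision epoch. The cleanest way to handle this is to couple the two systems via a shared sequence of i.i.d.\ $\mathrm{Exp}(1)$ "clocks" that drive potential departures at unit rate, with the actual departure occurring in whichever system whose current total allocation scales the clock first; by memorylessness this coupling is consistent with any reassignment of identities among the inelastic jobs. A small subtlety is handling the fractional server that $\pi$ may give to the next inelastic job in line, but since only the total inelastic allocation matters for the aggregate departure rate, this causes no difficulty.
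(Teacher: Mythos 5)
Your proposal is correct and follows essentially the same route as the paper: both appeal to the idling result to restrict to a work-conserving optimal policy, then observe that the transition rates of the Markov chain $(N_I(t),N_E(t))$ depend only on the total allocations $\pi_I(i,j)$ and $\pi_E(i,j)$, so reassigning the inelastic allocation in FCFS order yields a policy in $\mathcal{P}$ with an identical chain and hence identical mean number in system and (by Little's Law) mean response time. The extra coupling machinery you describe is not needed beyond the observation that the two chains share the same generator.
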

\begin{proof}
    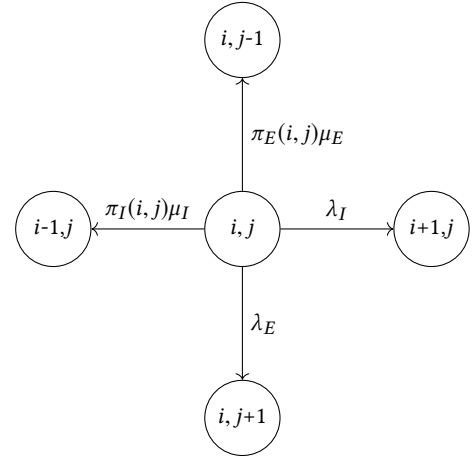
\begin{figure}
        \begin{tikzpicture}
		[%%%%%%%%%%%%%%%%%%%%%%%%%%%%%%%%%%%%%%%%%%%%%%%%%%%%%%%%%%
		node distance =1.5cm,
		place/.style={circle,draw=black!, 
			inner sep=0pt,minimum size=10mm}
		]%%%%%%%%%%%%%%%%%%%%%%%%%%%%%%%%%%%%%%%%%%%%%%%%%%%%%%%%%%
		\node[place] (11) {$i, j$};
		\node[place] (01) [left=of 11] {$i$-1,$j$};
		\node[place] (21) [right=of 11] {$i$+1,$j$};
		\node[place] (10) [above=of 11] {$i, j$-1};
		\node[place] (12) [below=of 11] {$i, j$+1};
		
		\draw[->] (11.east) to [bend right=0] node[above] {$\lambda_I$} (21.west);
		\draw[->] (11.south) to [bend right=0] node[right] {$\lambda_E$} (12.north);
		\draw[->] (11.west) to [bend right=0] node[above] {$\pi_I(i,j)\mu_I$} (01.east);
		\draw[->] (11.north) to [bend right=0] node[right] {$\pi_E(i,j)\mu_E$} (10.south);
		
\end{tikzpicture}
        \caption{The Markov chain $(N^\pi_I(t),N^\pi_E(t))$ for a stationary, deterministic, work-conserving allocation policy, $\pi$.}
        \label{fig:snapshot}
    \end{figure}
    Recall that we will consider only stationary, deterministic, work-conserving policies which make allocation decisions based on state $(i, j)$.
    Let $\pi$ be a stationary, deterministic, work-conserving policy with the minimal mean number of jobs in system.
    Figure \ref{fig:snapshot} shows the transition rates out of state $(i,j)$ under $\pi$.
    
%	\begin{enumerate}
%		\item $k_I\cdot\mu_I$ to the state $(i - 1, j)$, where $k_I \in [0, k]$ denotes the number of servers currently serving inelastic jobs. Note, $k_I$ is not restricted to being an integer. This is to allow the Markov chain to represent policies which processor share.
%		\item $k_E\cdot \mu_E$ to the state $(i, j -1)$, where $k_E \in [0, k]$ denotes the number of servers currently serving elastic jobs.
%		\item $\lambda_I$ to the state $(i + 1, j)$ (where $\lambda_I$ is the rate of the inelastic Poisson arrival process).
%		\item $\lambda_E$ to the state $(i, j + 1)$ (where $\lambda_E$ is the rate of the elastic Poisson arrival process).
%	\end{enumerate}
	We see that the transition rates out of the current state $(i, j)$ under policy $\pi$ depend solely on the number of servers allocated to each type of job.
    Thus, neither the order in which we serve the jobs nor how many jobs of each type are running matter.
    In particular, we can construct a policy $\pi'$ such that, for any state $(i,j)$, 
    $$\pi_I(i,j) = \pi_I'(i,j) \qquad \mbox{and} \qquad \pi_E(i,j) = \pi_E'(i,j)$$
    and $\pi'$ serves inelastic jobs in FCFS order.
    The policy $\pi'$ has the same Markov chain as $\pi$, so the expected numbers of jobs in system under $\pi$ and $\pi'$ are identical.
    Because $\pi$ is work-conserving, $\pi'$ is also work-conserving.
    Hence, $\pi'$ is in $\mathcal{P}$ and achieves the minimal mean number of jobs in system.
\end{proof}

%We reiterate the importance of Little's Law in our analysis: a policy is optimal with respect to mean response time if and only if it is optimal with respect to mean number of jobs in system. 
The power of Theorem~\ref{equiv_fcfs} is that, to show \Policy{} is optimal with respect to mean response time, it now suffices to show:
\begin{equation}
\label{eq:num_IF}
\mathbb{E}\left[N^{\ExpPolicy{}}\right] \leq \mathbb{E}\left[N^{\pi}\right] \qquad \forall \pi \in \mathcal{P}.
\end{equation}
However, it is hard to directly compare the \emph{numbers of jobs} under different policies. We get around this roadblock by instead analyzing how the \emph{remaining work} in the system under \Policy{} relates to other policies $\pi \in \mathcal{P}$. In particular, we obtain the following strong result.
\begin{figure}
    \includegraphics[width=.45\textwidth]{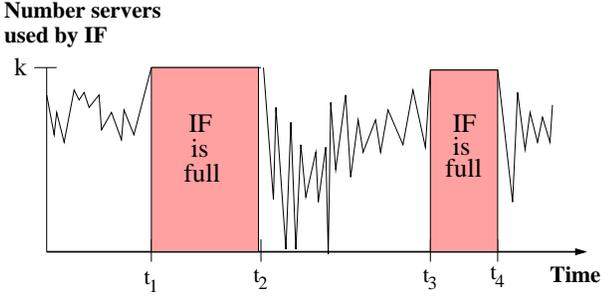}
    \caption{Intervals of time during which all $k$ servers are busy under \Policy{}}
    \label{fig:optfig}
\end{figure}
\begin{theorem}
	\label{sample_path_opt}
	For all policies $\pi \in \mathcal{P}$, if we assume that 
	$$(N_I^\pi(0),N_E^\pi(0)) = (N_I^{\ExpPolicy{}}(0),N_E^{\ExpPolicy{}}(0)),$$
	then:
	$$W^{\ExpPolicy{}}(t) \leq_{ST} W^{\pi}(t) \text{ and }\;
	W^{\ExpPolicy{}}_I(t) \leq_{ST} W^{\pi}_I(t) \qquad \forall t\geq 0,$$
	where $W^{\pi}(t)$ is the total remaining work under policy $\pi$ at time $t$, $W^{\pi}_I(t)$ is the remaining inelastic work under policy $\pi$ at time $t$, and $\leq_{ST}$ denotes stochastic dominance.
\end{theorem}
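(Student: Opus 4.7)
The plan is to establish both dominances via a single pathwise coupling of the two processes. I would put $\ExpPolicy{}$ and $\pi$ on a common probability space by synchronizing all exogenous randomness: the arrival times, job types, and job sizes are taken to be identical under both policies. Under this coupling it suffices to show that $W^{\ExpPolicy{}}(t) \leq W^{\pi}(t)$ and $W^{\ExpPolicy{}}_I(t) \leq W^{\pi}_I(t)$ hold pointwise almost surely for every $t \geq 0$, which immediately implies the claimed stochastic dominance.

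For the inelastic dominance, I would exploit the fact that $\Policy{}$ always serves inelastic jobs at the maximum possible rate. At any time $t$, $\Policy{}$ allocates $\min(N^{\ExpPolicy{}}_I(t),\,k)$ servers to inelastic work, while any $\pi \in \mathcal{P}$ allocates $\pi_I(N^\pi_I(t), N^\pi_E(t)) \leq \min(N^\pi_I(t),\,k)$ because each inelastic job can absorb at most one server. Since both policies serve inelastic jobs in FCFS order, one can track the cumulative service received by the $j$-th inelastic job in each system and argue inductively over event times that this quantity is at least as large under $\Policy{}$ as under $\pi$. Because the incoming inelastic work is identical in the two coupled systems, this yields $W^{\ExpPolicy{}}_I(t) \leq W^{\pi}_I(t)$.

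For the total work dominance, the idea is to reduce the claim to a comparison of cumulative idle-server time. Both systems see the same cumulative arrival work $A(t)$, and under any work-conserving policy a busy server drains work at unit rate, so $W^{\pi}(t) = W^{\pi}(0) + A(t) - (k t - I^{\pi}(t))$ where $I^{\pi}(t)$ denotes the cumulative server-idle time under $\pi$ up to time $t$. Hence $W^{\ExpPolicy{}}(t) \leq W^{\pi}(t)$ is equivalent to $I^{\ExpPolicy{}}(t) \leq I^{\pi}(t)$. Under any work-conserving policy a server is idle at time $s$ precisely when $N_E(s) = 0$ and $N_I(s) < k$, so, as suggested by Figure~\ref{fig:optfig}, the time axis partitions into intervals on which $\Policy{}$ has all $k$ servers busy and intervals on which $\Policy{}$ is underutilized (necessarily because $N^{\ExpPolicy{}}_E = 0$ and $N^{\ExpPolicy{}}_I < k$).

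The main obstacle is showing sample-pathwise that $\Policy{}$'s cumulative idle time is no greater than $\pi$'s. The difficulty is that the state trajectories of the two policies diverge, so pointwise-in-time comparisons are subtle: $\Policy{}$ may enter an idle interval strictly before $\pi$ does, precisely because $\Policy{}$ drains inelastic work faster and can run out of inelastic jobs while $\pi$ still has some. The crux is therefore to maintain the right joint invariant at all coupled event times — for example, carrying $N^{\ExpPolicy{}}_I(t) \leq N^{\pi}_I(t)$ alongside the two work inequalities — and to verify that every event type (elastic or inelastic arrival, inelastic or elastic completion under either policy) preserves this invariant. Coupling the completion epochs cleanly across the two policies, most naturally via a uniformization construction in which a common Poisson clock drives potential transitions and the two policies interpret each tick according to their own allocation, is the delicate step that makes the invariant propagation go through.
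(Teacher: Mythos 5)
Your coupling setup and your argument for the inelastic dominance essentially match the paper's proof: fix the arrival sequence and job sizes, note that both policies serve inelastic jobs in FCFS order, that \Policy{} gives them maximal service and never preempts them, and conclude $W^{\ExpPolicy{}}_I(t) \le W^{\pi}_I(t)$ pointwise. The gap is in the total-work half. You correctly partition time into intervals where \Policy{} is fully busy and intervals where it is underutilized, and you correctly observe that underutilization forces $N^{\ExpPolicy{}}_E = 0$, but you then leave the actual comparison unresolved: reducing the claim to $I^{\ExpPolicy{}}(t) \le I^{\pi}(t)$ and proposing to propagate an invariant such as $N^{\ExpPolicy{}}_I(t) \le N^{\pi}_I(t)$ through a uniformized event clock is exactly the ``delicate step'' that you admit you have not carried out, and it is not how the difficulty gets resolved. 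It is also in tension with your own coupling: once arrival times and job sizes are fixed, both trajectories evolve deterministically, so there are no completion epochs left to couple with a common Poisson clock.

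The missing idea is that the inelastic dominance, which you have already established for \emph{all} times, closes the underutilized case outright. Induct over the endpoints $t_i$ of the intervals. On an interval where \Policy{} runs all $k$ servers it drains work at the maximal possible rate $k$, so the gap $W^{\pi} - W^{\ExpPolicy{}}$ cannot decrease there. At the right endpoint $t_{i+1}$ of an underutilized interval, \Policy{} has no elastic jobs in system, hence
$$W^{\ExpPolicy{}}(t_{i+1}) = W^{\ExpPolicy{}}_I(t_{i+1}) \le W^{\pi}_I(t_{i+1}) \le W^{\pi}(t_{i+1}),$$
so the ordering is re-established from scratch at the end of every such interval, with no idle-time bookkeeping and no queue-length invariant needed. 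With that one observation your outline becomes the paper's proof; without it, the total-work claim remains unproven.
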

\begin{proof}

    Fix an arbitrary policy $\pi \in \mathcal{P}$, and let us consider a fixed \emph{arrival sequence}, that is, a fixed sequence of arrival times and job sizes.
    We couple $\pi$ and \Policy{} under this sequence. Here, it suffices to consider arrival sequences where the total number of job arrivals up to any time $t$ is finite, as this occurs with probability 1.
	
	Recall that $W^\pi_I(t)$ and $W^\pi_E(t)$ are respectively the remaining inelastic and elastic work in the system at time $t$ under scheduling policy $\pi$. Furthermore, also recall that $W^\pi(t)$, the total work at time $t$, is given by:
	\begin{equation*}
	W^\pi(t) = W^\pi_I(t) + W^\pi_E(t).
	\end{equation*}

    In order to show the desired stochastic dominance relations, it will suffice to show that on any such arrival sequence 
    $$  W^{\ExpPolicy{}}_I(t) \leq W^\pi_I(t) \quad \mbox{and} \quad  W^{\ExpPolicy{}}(t) \leq W^\pi(t) \quad \forall t \geq 0.$$

	First, we see it is immediate that, under our arrival sequence, $W^{\ExpPolicy{}}_I(t) \leq W^{\pi}_I(t)$ for all $t \geq 0$.
    Since \Policy{} and $\pi$ process inelastic jobs in FCFS order, each inelastic job enters service at least as early under \Policy{} as it does under $\pi$.
    Furthermore, \Policy{} never preempts inelastic jobs.
    Hence, at each time $t$, the remaining size of each inelastic job that has arrived by time $t$ is  no larger under \Policy{} than it is under $\pi$.
    Since the inelastic work in system is just the sum of the remaining sizes of inelastic jobs, the total inelastic work at time $t$ under \Policy{} is less than the total inelastic work at time $t$ under $\pi$.
	
	It remains to show that
	\begin{equation} 
	\label{eq:total}
	W^{\ExpPolicy{}}(t) \leq W^\pi(t) \qquad \forall t \geq 0.
	\end{equation}
	
    We prove our claim by induction. For a base case, it is clear that $W^{\ExpPolicy{}}(0) \leq W^{\pi}(0)$, as the policies have the same set of jobs at time zero, and no work has been completed yet. For any time $t$, we partition the interval $[0,t]$ into subintervals $[t_i, t_{i + 1}]$ (see Figure \ref{fig:optfig}) such that either
	\begin{enumerate}
		\item \Policy{} allocates all $k$ servers on $[t_i, t_{i + 1}]$, or
		\item \Policy{} allocates strictly less than $k$ servers on $[t_i, t_{i + 1}]$.
	\end{enumerate}
	
	We now induct on $i$, and show that $W^{\ExpPolicy}(t_i) \leq W^{\pi}(t_i)$ implies $W^{\ExpPolicy{}}(t_{i + 1}) \leq W^{\pi}(t_{i + 1})$.

    If the interval $[t_i,t_{i+1}]$ falls into case (1), \Policy{} is completing work at the maximal rate of any policy.
    In particular, \Policy{} completes exactly $(t_{i + 1} - t_i)\cdot k$ work on $[t_i, t_{i + 1}]$. Let $\omega$ denote the work completed by $\pi$ on $[t_i, t_{i + 1}]$.
    Then, we must have $\omega \leq (t_{i + 1} - t_i)\cdot k$.
    Since \Policy{} and $\pi$ experience the same set of arrivals on this interval, we have:
	\begin{align*}
	W^{\pi}(t_{i + 1}) - W^{\ExpPolicy{}}(t_{i + 1}) &= \left(W^{\pi}(t_i) - \omega\right) - \left(W^{\ExpPolicy{}}(t_i) -  (t_{i + 1} - t_i)\cdot k\right)  \\
	&= \left(W^{\pi}(t_i) - W^{\ExpPolicy{}}(t_i)\right) + \left((t_{i + 1} - t_i)\cdot k  - \omega\right) \\
	&\geq 0.
	\end{align*}
	Thus, we have $W^{\ExpPolicy{}}(t_{i + 1}) \leq W^{\pi}(t_{i + 1})$, as desired.
	
	If the interval $[t_i,t_{i+1}]$ falls into case (2), \Policy{} allocates strictly less than $k$ servers on $[t_i, t_{i + 1}]$.
    We aim to show that $W^{\ExpPolicy{}}(t_{i + 1}) \leq W^{\pi}(t_{i + 1})$. Observe that \Policy{} can have no elastic jobs in its system on $[t_i, t_{i + 1})$. This is because we have defined \Policy{} to be work-conserving.
    Hence, if there was an elastic job, \Policy{} would run it on all available servers.
    
    Observe that, assuming no elastic job arrives at time $t_{i+1}$,
    $$ W^{\ExpPolicy{}}(t_{i + 1}) = W^{\ExpPolicy{}}_I(t_{i + 1}).$$
	Likewise, we know
	$$ W^{\pi}(t_{i + 1}) = W^{\pi}_I(t_{i + 1}) + W^{\pi}_E(t_{i + 1}) \geq W^{\pi}_I(t_{i + 1}).$$

	We get the inequality above because $\pi$ cannot have negative elastic work at time $t_{i + 1}$.
    Finally, we have
	\begin{align*}
		W^{\pi}(t_{i + 1}) - W^{\ExpPolicy{}}(t_{i + 1}) &= \left(W^{\pi}_I(t_{i + 1}) + W^{\pi}_E(t_{i + 1})\right) - W^{\ExpPolicy{}}_I(t_{i + 1}) \\
		&= \left(W^{\pi}_I(t_{i + 1}) - W^{\ExpPolicy{}}_I(t_{i + 1})\right) + W^{\pi}_E(t_{i + 1})  \\
		&\geq W^{\pi}_I(t_{i + 1}) - W^{\ExpPolicy{}}_I(t_{i + 1})\\
		&\geq 0,
	\end{align*}
	where the last inequality follows from the fact that $W^{\ExpPolicy{}}_I(t') \leq W^{\pi}_I(t')$ for all $t' \geq 0$. Thus, we have $W^{\ExpPolicy{}}(t_{i + 1}) \leq W^{\pi}(t_{i + 1})$.

    As a side note, some elastic work could arrive at exactly time $t_{i+1}$.
    However, this increases the total work in both systems by the same amount and thus has no effect on the ordering of these quantities.
	
	Thus, for any interval $[t_i, t_{i + 1}]$, if $W^{\ExpPolicy{}}(t_i) \leq W^{\pi}(t_i)$, then we have $W^{\ExpPolicy}(t_{i + 1}) \leq W^{\pi}(t_{i + 1})$. Since $W^{\ExpPolicy{}}(0) \leq W^{\pi}(0)$, it follows that this inequality holds at the end of the last subinterval.  The end of this final subinterval is exactly time $t$.  Thus, for any $t \geq 0$, we have $W^{\ExpPolicy{}}(t) \leq W^{\pi}(t)$, as desired.
	
We have thus found a coupling of $\pi$ and \Policy{} such that the amount of total work and the amount of inelastic work in each system is ordered at every moment in time.  This implies that
$$W_I^{\ExpPolicy{}}(t) \leq_{ST} W_I^{\pi}(t) \quad\forall t \geq 0$$
and
$$W^{\ExpPolicy{}}(t) \leq_{ST} W^{\pi}(t) \quad\forall t \geq 0$$
as desired.

\end{proof}

In other words, \Policy{} is the best policy in $\mathcal{P}$ for minimizing remaining inelastic and total work in the system. One possible explanation for this is that, by deferring parallelizable work, \Policy{} ensures that all $k$ servers are saturated with work for as long as possible.

%\begin{lemma}
%For any inelastic-FCFS policy $\pi$, we have
%\begin{align}
%\label{lem:if_min_work_fcfs}
%\mathbb{E}\left[W^{\ExpPolicy{}}_I\right] &\leq \mathbb{E}\left[W^\pi_I\right] & &\text{and} & \mathbb{E}\left[W^{\ExpPolicy{}}\right] &\leq \mathbb{E}\left[W^\pi\right]
%\end{align}
%\end{lemma}
%\begin{proof}
%	Immediate from Theorem~\ref{sample_path_opt}.
%\end{proof}

We now understand that, out of all policies in $\mathcal{P}$, \Policy{} is optimal with respect to minimizing both expected remaining inelastic work \emph{and} expected remaining total work at any time $t$. We now establish a relationship between expected remaining work and expected number of jobs in system.

\begin{lemma}
	\label{work_as_prod}
	For any policy $\pi$, we have:
	\begin{align*}
		\mathbb{E}[W^\pi_I] &= \frac{1}{\mu_I}\mathbb{E}[N^\pi_I] & &\text{ and } & \mathbb{E}[W^\pi_E] &= \frac{1}{\mu_E}\mathbb{E}[N^\pi_E],
	\end{align*}
	where $W_I^\pi$ and $N_I^\pi$ are respectively the inelastic work and number of inelastic jobs in the system in steady-state under policy $\pi$. Furthermore, $S_I$ is the size of an inelastic job, distributed as $S_I \sim Exp(\mu_I)$. $W^\pi_E, N^\pi_E,$ and $S_E$ are the analogous quantities for elastic jobs.
\end{lemma}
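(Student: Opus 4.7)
My plan is to prove only the inelastic identity $\mathbb{E}[W^\pi_I] = \mathbb{E}[N^\pi_I]/\mu_I$; the elastic identity follows by the same argument with $I$ replaced by $E$. The core idea is the memoryless property of the exponential distribution: since each $S_I \sim \mathrm{Exp}(\mu_I)$ and the scheduling decisions of $\pi$ depend only on the Markov state $(i,j)$ (not on any information about individual remaining sizes), the remaining size of every inelastic job still present at any fixed time is again $\mathrm{Exp}(\mu_I)$-distributed.

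First I would fix $\pi$ and work in the steady-state distribution of $(N^\pi_I(t), N^\pi_E(t))$. The key claim to establish is that, conditional on the state $(N^\pi_I, N^\pi_E) = (i, j)$, the remaining sizes $R_1, \ldots, R_i$ of the inelastic jobs currently in the system are each $\mathrm{Exp}(\mu_I)$. Writing $W^\pi_I = \sum_{\ell = 1}^{N^\pi_I} R_\ell$ and conditioning on the state gives
\begin{align*}
\mathbb{E}\left[W^\pi_I \mid (N^\pi_I, N^\pi_E) = (i,j)\right] = \sum_{\ell=1}^{i} \mathbb{E}\left[R_\ell \mid (N^\pi_I, N^\pi_E) = (i,j)\right] = \frac{i}{\mu_I}.
\end{align*}
Taking the outer expectation via the tower property yields $\mathbb{E}[W^\pi_I] = \mathbb{E}[N^\pi_I]/\mu_I$, as required.

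The main obstacle is justifying the conditional exponentiality of the remaining sizes. Concretely, for any job $\ell$ present at the observation time, I would write $R_\ell = S_\ell - A_\ell$, where $A_\ell$ is the total service already received; on the event $\{S_\ell > A_\ell\}$, memorylessness gives $R_\ell \sim \mathrm{Exp}(\mu_I)$ provided $A_\ell$ is determined by information independent of $S_\ell$ beyond the indicator $\{S_\ell > A_\ell\}$. Because $\pi$ is stationary, deterministic, and a function only of the counts $(i,j)$, the policy never learns the size of any in-progress job; it only observes whether $\ell$ has completed. This suffices to apply memorylessness, after which the computation above is pure bookkeeping.
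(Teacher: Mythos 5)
Your proposal is correct and follows essentially the same route as the paper: decompose $W^\pi_I$ as the sum of remaining sizes, invoke memorylessness to argue each remaining size is $\mathrm{Exp}(\mu_I)$ independent of the job count, and conclude $\mathbb{E}[W^\pi_I]=\mathbb{E}[S_I]\cdot\mathbb{E}[N^\pi_I]$. The only cosmetic difference is that you condition on the steady-state counts and apply the tower property directly, whereas the paper establishes the identity at each finite time $t$ and then passes to the limit $t\to\infty$ (noting in a footnote the convergence subtlety); both arguments rest on the same key observation.
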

\begin{proof}
	We do the proof for the inelastic relationship, but the proof for the elastic relationship is identical. Let the random variable $N^{\pi}_I(t)$ denote the number of inelastic jobs in the system under policy $\pi$ at time $t$. Assume that $\ell \in \{1, \dots, N^{\pi}_I(t)\}$ is used as an index for the jobs which are in the system at time $t$, and define $R^{\pi}_{\ell,I}(t)$ as the \textit{remaining size} of inelastic job $\ell$ under policy $\pi$ at time~$t$. 
	
	Recall $W^{\pi}_I(t)$ is the remaining inelastic work in system at time $t$ under policy $\pi$. We have the following equivalence:
	\begin{equation*}
	W^{\pi}_I(t) = \sum_{\ell = 1}^{N^{\pi}_I(t)}R^{\pi}_{\ell, I}(t).
	\end{equation*}
	
	By the memoryless property of the exponential distribution, the remaining size of jobs $\ell \in \{1, \dots, N^{\pi}_I(t)\}$ also follow an exponential distribution.
    Specifically, $R^{\pi}_{\ell,I}(t) \sim Exp(\mu_I)$, regardless of policy $\pi$ or time $t$. 
    Thus, $N_I^{\pi}(t)$ and $R^{\pi}_{\ell, I}(t)$ are independent and we have that
	\begin{align*}
		\mathbb{E}[W^{\pi}_I(t)] &= \mathbb{E}[R^{\pi}_{\ell, I}(t)]\cdot\mathbb{E}[N^{\pi}_I(t)] \\
		&= \mathbb{E}[S_I]\cdot\mathbb{E}[N^{\pi}_I(t)].
	\end{align*}
    As shown in Appendix \ref{sec:stable}, $\mathbb{E}[N^{\pi}_I(t)]$ converges to $\mathbb{E}[N^{\pi}_I]$ as $t \rightarrow \infty$.
    This implies the convergence of $\mathbb{E}[W^{\pi}_I(t)]$.
	Thus, taking the limit as $t \rightarrow \infty$ yields:
	\begin{equation*}
		\mathbb{E}[W^\pi_I] = \mathbb{E}[S_I]\cdot\mathbb{E}[N^\pi_I],
	\end{equation*}
    where
    \begin{equation*}
        \mathbb{E}[S_I]=\frac{1}{\mu_I}
    \end{equation*}
    as desired \footnote{Technically, we have only proven that $\mathbb{E}[W^{\pi}_I(t)]$ converges to \emph{some value}, but not that it converges to $\mathbb{E}[W^{\pi}_I]$.  This would be sufficient for our subsequent results.  It turns out that $\mathbb{E}[W^{\pi}_I(t)]$ converges to $\mathbb{E}[W^{\pi}_I]$ as $t \rightarrow \infty$, but we omit this proof for brevity.}.
\end{proof}

We can now show that \Policy{} has the lowest expected number of jobs in system when $\mu_I \geq \mu_E$. 

\begin{theorem}
	\label{opt_fcfs}
For any policy $\pi$, if $\mu_I \geq \mu_E$, we have:
\[
\mathbb{E}\left[N^{\ExpPolicy{}}\right] \leq \mathbb{E}\left[N^\pi\right].
\]
And via Little's Law, we have:
	\begin{equation*}
	    \mathbb{E}\left[T^{\ExpPolicy{}}\right] \leq \mathbb{E}\left[T^\pi\right].
	\end{equation*}
\end{theorem}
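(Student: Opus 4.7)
The plan is to combine the three results already established in this section: Theorem \ref{equiv_fcfs} to restrict attention to policies in $\mathcal{P}$, Theorem \ref{sample_path_opt} to get two sample-path (and hence expectation) inequalities comparing \Policy{} to any $\pi \in \mathcal{P}$, and Lemma \ref{work_as_prod} to convert statements about expected work into statements about expected numbers of jobs. The final, and only novel, step is a short algebraic manipulation that uses the hypothesis $\mu_I \geq \mu_E$ to combine these pieces into the desired bound on $\mathbb{E}[N^{\ExpPolicy}]$.

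First I would invoke Theorem \ref{equiv_fcfs} to reduce the problem: the infimum of $\mathbb{E}[N^{\pi}]$ is attained by some policy in $\mathcal{P}$, so it suffices to prove $\mathbb{E}[N^{\ExpPolicy}] \leq \mathbb{E}[N^{\pi}]$ for every $\pi \in \mathcal{P}$. Next, fix such a $\pi$. Starting the two systems from the same initial state and taking expectations of the stochastic-dominance relations in Theorem~\ref{sample_path_opt} gives $\mathbb{E}[W^{\ExpPolicy}_I(t)] \leq \mathbb{E}[W^{\pi}_I(t)]$ and $\mathbb{E}[W^{\ExpPolicy}(t)] \leq \mathbb{E}[W^{\pi}(t)]$ for all $t \geq 0$; passing to the $t \to \infty$ limit (as justified in Appendix~\ref{sec:stable} and used in the proof of Lemma~\ref{work_as_prod}) yields the two steady-state inequalities $\mathbb{E}[W^{\ExpPolicy}_I] \leq \mathbb{E}[W^{\pi}_I]$ and $\mathbb{E}[W^{\ExpPolicy}_I] + \mathbb{E}[W^{\ExpPolicy}_E] \leq \mathbb{E}[W^{\pi}_I] + \mathbb{E}[W^{\pi}_E]$. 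Applying Lemma~\ref{work_as_prod} to each term rewrites these as
\[
\frac{\mathbb{E}[N^{\ExpPolicy}_I]}{\mu_I} \leq \frac{\mathbb{E}[N^{\pi}_I]}{\mu_I}, \qquad \frac{\mathbb{E}[N^{\ExpPolicy}_I]}{\mu_I} + \frac{\mathbb{E}[N^{\ExpPolicy}_E]}{\mu_E} \leq \frac{\mathbb{E}[N^{\pi}_I]}{\mu_I} + \frac{\mathbb{E}[N^{\pi}_E]}{\mu_E}.
\]

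The final step is the algebraic combination. Write $n^{\ExpPolicy}_X = \mathbb{E}[N^{\ExpPolicy}_X]$ and $n^{\pi}_X = \mathbb{E}[N^{\pi}_X]$ for $X \in \{I, E\}$. The identity
\[
n^{\ExpPolicy}_I + n^{\ExpPolicy}_E = \Bigl(1 - \tfrac{\mu_E}{\mu_I}\Bigr) n^{\ExpPolicy}_I + \mu_E \Bigl(\tfrac{n^{\ExpPolicy}_I}{\mu_I} + \tfrac{n^{\ExpPolicy}_E}{\mu_E}\Bigr)
\]
exhibits the total number of jobs as a nonnegative linear combination of the two quantities for which \Policy{} beats $\pi$, provided $1 - \mu_E/\mu_I \geq 0$, which is exactly the hypothesis $\mu_I \geq \mu_E$. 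Applying the two inequalities from the previous paragraph term by term then gives
\[
n^{\ExpPolicy}_I + n^{\ExpPolicy}_E \leq \Bigl(1 - \tfrac{\mu_E}{\mu_I}\Bigr) n^{\pi}_I + \mu_E \Bigl(\tfrac{n^{\pi}_I}{\mu_I} + \tfrac{n^{\pi}_E}{\mu_E}\Bigr) = n^{\pi}_I + n^{\pi}_E,
\]
which is the desired inequality $\mathbb{E}[N^{\ExpPolicy}] \leq \mathbb{E}[N^{\pi}]$. Little's Law (applied to both systems with the same total arrival rate $\lambda_I + \lambda_E$) then converts this into $\mathbb{E}[T^{\ExpPolicy}] \leq \mathbb{E}[T^{\pi}]$.

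I expect no real obstacles: the sample-path theorem does all the heavy lifting, Lemma~\ref{work_as_prod} is the bridge, and the only subtlety is the sign of $1 - \mu_E/\mu_I$, which is precisely where the hypothesis $\mu_I \geq \mu_E$ enters and explains why the argument fails in the regime $\mu_I < \mu_E$ addressed in Section~\ref{sec:less}. The only minor care needed is in the passage from sample-path dominance at time $t$ to the steady-state expectation inequalities, which relies on the ergodicity established in Appendix~\ref{sec:stable}.
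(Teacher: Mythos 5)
Your proposal is correct and follows essentially the same route as the paper: reduce to $\mathcal{P}$ via Theorem~\ref{equiv_fcfs}, take expectations of the dominance relations from Theorem~\ref{sample_path_opt}, convert work to job counts via Lemma~\ref{work_as_prod}, and write $\mathbb{E}[N]$ as the nonnegative combination $(\mu_I-\mu_E)\mathbb{E}[W_I]+\mu_E\mathbb{E}[W_I+W_E]$, which is exactly the paper's decomposition up to trivial rescaling. The sign condition $\mu_I\geq\mu_E$ enters in precisely the same place.
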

\begin{proof}
    Because there exists an optimal work-conserving policy in $\mathcal{P}$, it suffices to consider any policy $\pi \in \mathcal{P}$.
    We write total work under $\pi$ as $W^\pi = W^\pi_I + W^\pi_E$. Likewise, we have the equality $N^\pi = N^\pi_I + N^\pi_E$. First, from Lemma \ref{work_as_prod}, we have the following equalities:
	\begin{align*}
		\mathbb{E}[W^\pi_I] &= \frac{1}{\mu_I}\mathbb{E}[N^\pi_I] & &\text{ and } & \mathbb{E}[W^\pi_E] &= \frac{1}{\mu_E}\mathbb{E}[N^\pi_E].
	\end{align*}
    Furthermore, by the stochastic dominance results of Theorem \ref{sample_path_opt},
    $$\mathbb{E}[W_I^{\ExpPolicy{}}] \leq \mathbb{E}[W_I^\pi] \quad \mbox{and} \quad \mathbb{E}[W^{\ExpPolicy{}}] \leq \mathbb{E}[W^\pi]$$
	Thus, we have:
	\begin{align}
	\label{mu_i_bigger}
	\mathbb{E}\left[N^{\ExpPolicy{}}\right] &= \mathbb{E}\left[N^{\ExpPolicy{}}_I + N^{\ExpPolicy{}}_E\right]\nonumber\\
	&= \mu_I\mathbb{E}\left[W^{\ExpPolicy{}}_I\right] + \mu_E\mathbb{E}\left[W^{\ExpPolicy{}}_E\right]\nonumber\\
	&= (\mu_I - \mu_E)\mathbb{E}\left[W^{\ExpPolicy{}}_I\right] + \mu_E\mathbb{E}\left[W^{\ExpPolicy{}}_I + W^{\ExpPolicy{}}_E\right]\nonumber\\
	 &\leq (\mu_I - \mu_E)\mathbb{E}\left[W^{\pi}_I\right] + \mu_E\mathbb{E}\left[W^{\pi}_I + W^{\pi}_E\right] \\
	&= \mu_I\mathbb{E}\left[W^\pi_I\right] + \mu_E\mathbb{E}\left[W^\pi_E\right]\nonumber\\
	 &= \mathbb{E}[N^\pi_I] + \mathbb{E}[N^\pi_E]\nonumber\\ 
	 &= \mathbb{E}[N^\pi]\nonumber.
	\end{align}
	Note, we leverage the fact $\mu_I \geq \mu_E$ in \eqref{mu_i_bigger}. If $\mu_E > \mu_I$, then $\mu_I - \mu_E$ would be negative, so we would not be able establish a relationship like \eqref{mu_i_bigger}.  This completes the proof.
\end{proof}

We have therefore established that \Policy{} is optimal with respect to mean response time when $\mu_I \geq \mu_E$.
%\begin{corollary}
%    \label{little}
%	For any policy $\pi$ (not necessarily in $\mathcal{P}$), we have:
%	\begin{equation*}
%	    \mathbb{E}\left[N^{\ExpPolicy{}}\right] \leq \mathbb{E}\left[N^\pi\right].
%	\end{equation*}
%
%	where $N^\pi$ is the number of jobs in system in steady-state, and $T^\pi$ is the response time of a job in steady-state.
%\end{corollary}

\subsection{Failure when $\mu_I < \mu_E$}
\label{sec:less}

Now, we consider the case when $\mu_I < \mu_E$. Here, we demonstrate that \Policy{} is not optimal in minimizing mean response time. In fact, \Policy{} is not even optimal in the simplified environment where there are only two servers and no arrivals. We construct our counterexample in Theorem \ref{thm:less} below.
\begin{theorem}
	\label{thm:less}
    In general, \Policy{} is not optimal for minimizing mean response time when $\mu_I < \mu_E$.
\end{theorem}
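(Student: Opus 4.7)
The plan is to exhibit a small explicit counterexample rather than prove any general bound. I would specialize to $k=2$ servers, turn off arrivals ($\lambda_I = \lambda_E = 0$), choose rates with $\mu_I < \mu_E$ (concretely $\mu_I = 1$ and $\mu_E = 2$), and compare \Policy{} and \EPolicy{} starting from the initial state $(N_I(0), N_E(0)) = (2,1)$, i.e.\ two inelastic jobs and one elastic job already present. With only three jobs and no arrivals, the sum of response times equals $\int_0^\infty N(t)\,dt$, which is finite almost surely, and mean response time across the three initial jobs is $\tfrac{1}{3}E\bigl[\int_0^\infty N(t)\,dt\bigr]$. It suffices to show that \EPolicy{} achieves a strictly smaller value of this expectation at this single initial condition.

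For each policy I would enumerate the reachable states from $(2,1)$ in the finite absorbing Markov chain, compute the mean sojourn time in each visited state (an exponential whose rate equals the total outgoing rate under that policy's allocation), and accumulate $(\text{mean sojourn}) \times (\text{number in system})$ along each path, weighted by that path's probability. Under \EPolicy{} the chain is path-deterministic, $(2,1)\to(2,0)\to(1,0)\to(0,0)$, with sojourn rates $2\mu_E$, $2\mu_I$, and $\mu_I$, since both servers first clear the single elastic job at full parallelism and then serve the two inelastic jobs. Under \Policy{} the chain is $(2,1)\to(1,1)\to\{(0,1)\text{ or }(1,0)\}\to(0,0)$, branching at $(1,1)$ according to whether the remaining inelastic job or the elastic job finishes first, with probabilities $\mu_I/(\mu_I+\mu_E)$ and $\mu_E/(\mu_I+\mu_E)$. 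The final comparison collapses to a short linear expression in $1/\mu_I$ and $1/\mu_E$ for each policy, which is strictly smaller under \EPolicy{} whenever $\mu_E > \mu_I$.

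There is no real technical obstacle beyond the arithmetic; the only subtlety is picking a large enough starting state. The guiding intuition is that when $\mu_E > \mu_I$ the elastic job is typically the smallest job present, so \EPolicy{}'s strategy of clearing it first at full two-server speed removes one job from the count very quickly and then uses both servers on the remaining inelastic jobs in parallel, whereas \Policy{} is forced to run elastic work at the slow single-server rate for the entire time its inelastic queue is draining. The state $(1,1)$ by itself does not suffice (after \EPolicy{} clears the elastic job there is no inelastic queue to exploit the freed-up parallelism), which is why I would use $(2,1)$ as the counterexample seed.
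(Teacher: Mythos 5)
Your proposal is correct and takes essentially the same route as the paper's proof: the identical counterexample ($k=2$ servers, no arrivals, initial state of two inelastic jobs and one elastic job, $\mu_E = 2\mu_I$) analyzed by the identical state-by-state computation of expected sojourn times weighted by the number of jobs in system. One caveat on a side remark: your closing claim that \EPolicy{} wins ``whenever $\mu_E > \mu_I$'' is not true for this particular initial condition --- the difference of the two expected totals evaluates to $\frac{2\mu_I^2 - \mu_E^2}{2\mu_I\mu_E(\mu_I+\mu_E)}$, so \EPolicy{} is strictly better only when $\mu_E > \sqrt{2}\,\mu_I$; your concrete choice $\mu_E = 2\mu_I$ lies safely in that range, so the counterexample itself stands, but the general assertion should be dropped.
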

\begin{proof}
	Assume we have $k = 2$ servers, $\mu_E = 2\mu_I$, and there are no arrivals. We show that, if the system starts with two inelastic jobs and one elastic job, the policy \EPolicy{} outperforms \Policy{}.
	
	We directly compute the mean response time for both policies, starting with \Policy{}. We let $T^{\ExpPolicy{}}$ denote response time under \Policy{}, and $T^{\ExpEPolicy{}}$ denote response time under elastic first. We have:
	\begin{align*}
		\mathbb{E}\left[T^{\ExpPolicy{}}\right] &= \frac{3}{2\mu_I} + \frac{2}{\mu_I + \mu_E} + \frac{\mu_I}{\mu_I + \mu_E}\left(\frac{1}{2\mu_E}\right) + \frac{\mu_E}{\mu_I + \mu_E}\left(\frac{1}{\mu_I}\right) \\
		&= \frac{3}{2\mu_I} + \frac{2}{3\mu_I} + \frac{\mu_I}{3\mu_I}\left(\frac{1}{4\mu_I}\right) + \frac{2\mu_I}{3\mu_I}\left(\frac{1}{\mu_I}\right) \\
		&= \frac{3/2}{\mu_I} + \frac{2/3}{\mu_I} + \frac{1/12}{\mu_I} + \frac{2/3}{\mu_I} \\
		&= \frac{35/12}{\mu_I}.
	\end{align*}
	On the other hand, we see:
	\begin{align*}
        \mathbb{E}[T^{\ExpEPolicy{}}] &= \frac{3}{2\mu_E} + \frac{2}{2\mu_I} + \frac{1}{\mu_I} \\
		&= \frac{3/4}{\mu_I} + \frac{1}{\mu_I} + \frac{1}{\mu_I} \\
		&= \frac{33/12}{\mu_I}.
	\end{align*}
	In particular, we have $\mathbb{E}[T^{\ExpEPolicy{}}] < \mathbb{E}\left[T^{\ExpPolicy{}}\right]$. Thus, in general, $\Policy{}$ is not optimal when $\mu_I < \mu_E$. In fact, in this environment, we see \EPolicy{} outperforms \Policy{}.
\end{proof}

\section{Response Time Analysis Results}
\label{sec:analysis}
\begin{figure}
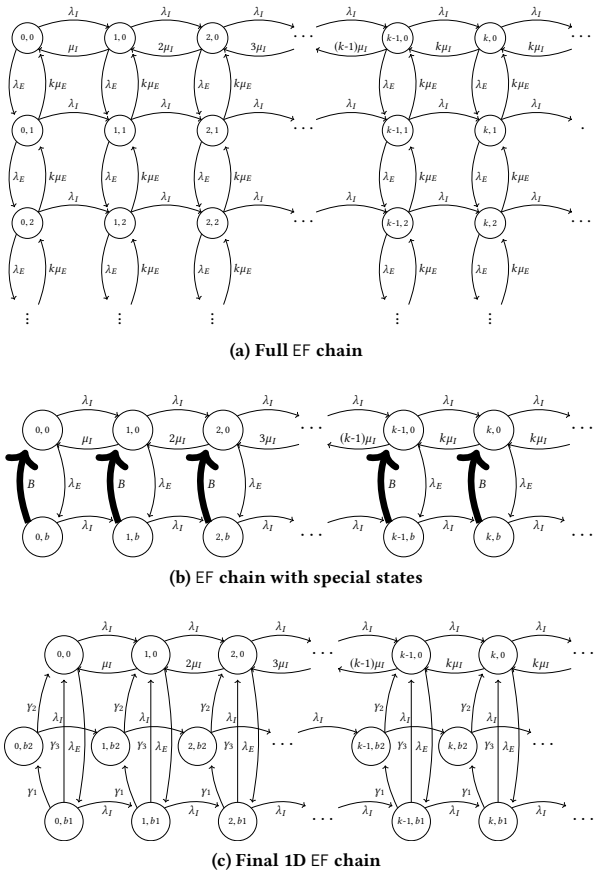

    \subfloat[Full \EPolicy{} chain]{
    \includestandalone[mode=buildnew,width=.45\textwidth]{figures/elastic_chain}
        \label{fig:elastic_full}
    }\\
    \subfloat[\EPolicy{} chain with special states]{
        \includestandalone[mode=buildnew,width=.45\textwidth]{figures/arrow_e_busy}
        \label{fig:elastic_busy}
    }\\
    \subfloat[Final 1D \EPolicy{} chain]{
        \includestandalone[mode=buildnew,width=.45\textwidth]{figures/elastic_chain_busy}
        \label{fig:elastic_1d}

    }\\

    \caption{The transformation of the 2D-infinite \EPolicy{} chain to a 1D-infinite chain via the busy period transformation.  Special states representing an $M/M/1$ busy period are shown in (b), and these busy periods are approximated by a Coxian distribution in (c).
    }
    \label{fig:chain_full}
\end{figure}

From the results of Section~\ref{sec:opt}, we know that \Policy{} is
optimal with respect to mean response time when $\mu_I \geq \mu_E$.
However, Section~\ref{sec:opt} also shows that \EPolicy{} can
outperform \Policy{} when $\mu_I < \mu_E$.  This begs the question of
which allocation policy, \Policy{} or \EPolicy{}, performs better for
given values of $\mu_I$ and $\mu_E$.

In this section we derive the mean response time for \EPolicy{} under a range of values of $\mu_I$, $\mu_E$, $\lambda_I$, $\lambda_E$, and $k$.  
The analysis for the \Policy{} policy is similar, and thus we defer it to Appendix \ref{sec:ifctmc}.
We outline our approach here:
\begin{enumerate}
    \item In Section~\ref{sec:ctmc} we present the Markov chains for \EPolicy{}.  This Markov chain is 2D-infinite.
\item In Section~\ref{sec:chain_conversion} we present a technique from the stochastic literature called Busy Period Transitions \cite{PerfEval06,Sigmetrics03b} which reduces the 2D-infinite chain to a 1D-infinite chain.  
    Although the Busy Period Transitions approach produces an approximation, it is known to be highly accurate, with errors of less than 1\% \cite{PerfEval06,Sigmetrics03b,SPAA03,QUESTA05,ICDCS03}.
\item In Section~\ref{sec:matrix}  we apply standard Matrix-Analytic methods to solve the 1D-infinite Markov chain, obtaining the stationary distribution and finally the mean response time \EPolicy{}.  
\end{enumerate}

The results of our analysis for \Policy{} and \EPolicy{} are shown in Figures ~\ref{fig:heat}, ~\ref{fig:line}, and \ref{fig:highk}.
We compared our analysis with simulation, and all numbers agree within 1\%.
We note that \cite{berg2018towards} used MDP-based techniques to analyze allocation policies in a similar model.
These previous results required truncating the state space, and were computationally intensive.
The techniques presented in this section do not require truncating the state space, can be tuned to arbitrary precision, and are comparatively efficient.
\begin{figure*}[h!t]
\centering
    \subfloat[Low load, $\rho=.5$]{\includegraphics[width=.33\textwidth]{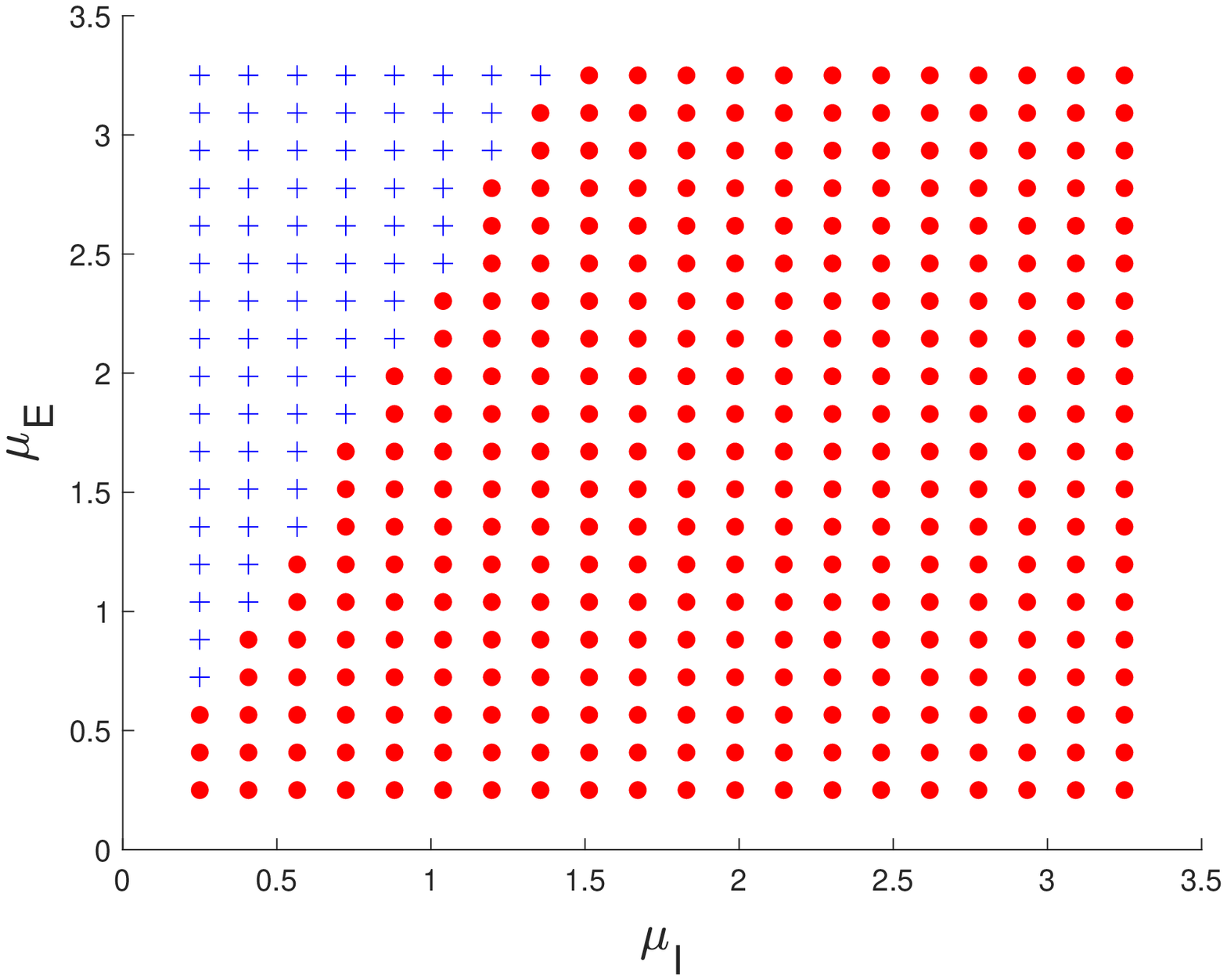}\label{fig:heat:5} }
    \subfloat[Med. Load, $\rho=.7$]{\includegraphics[width=.33\textwidth]{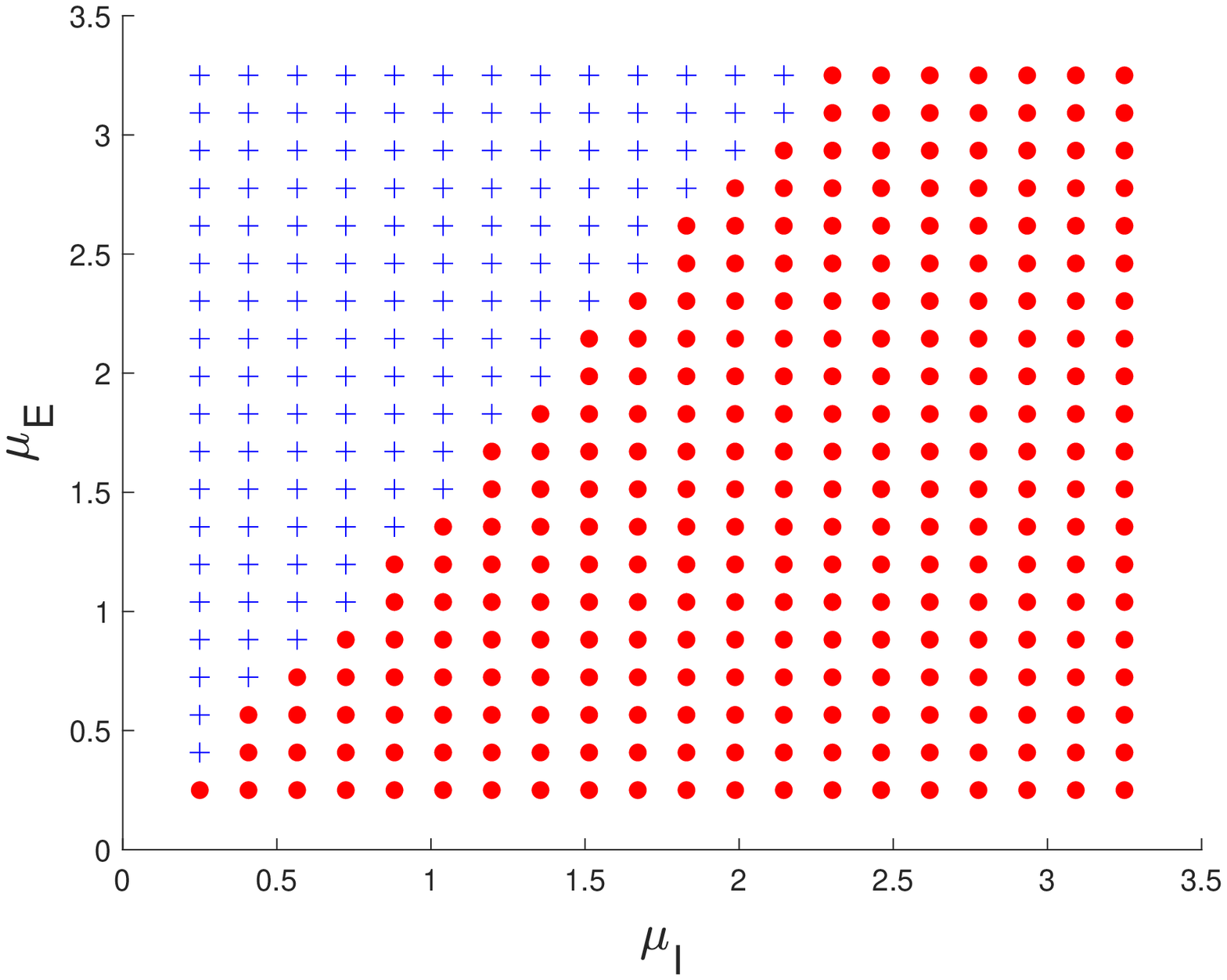}\label{fig:heat:7}} 
    \subfloat[High load, $\rho=.9$]{\includegraphics[width=.33\textwidth]{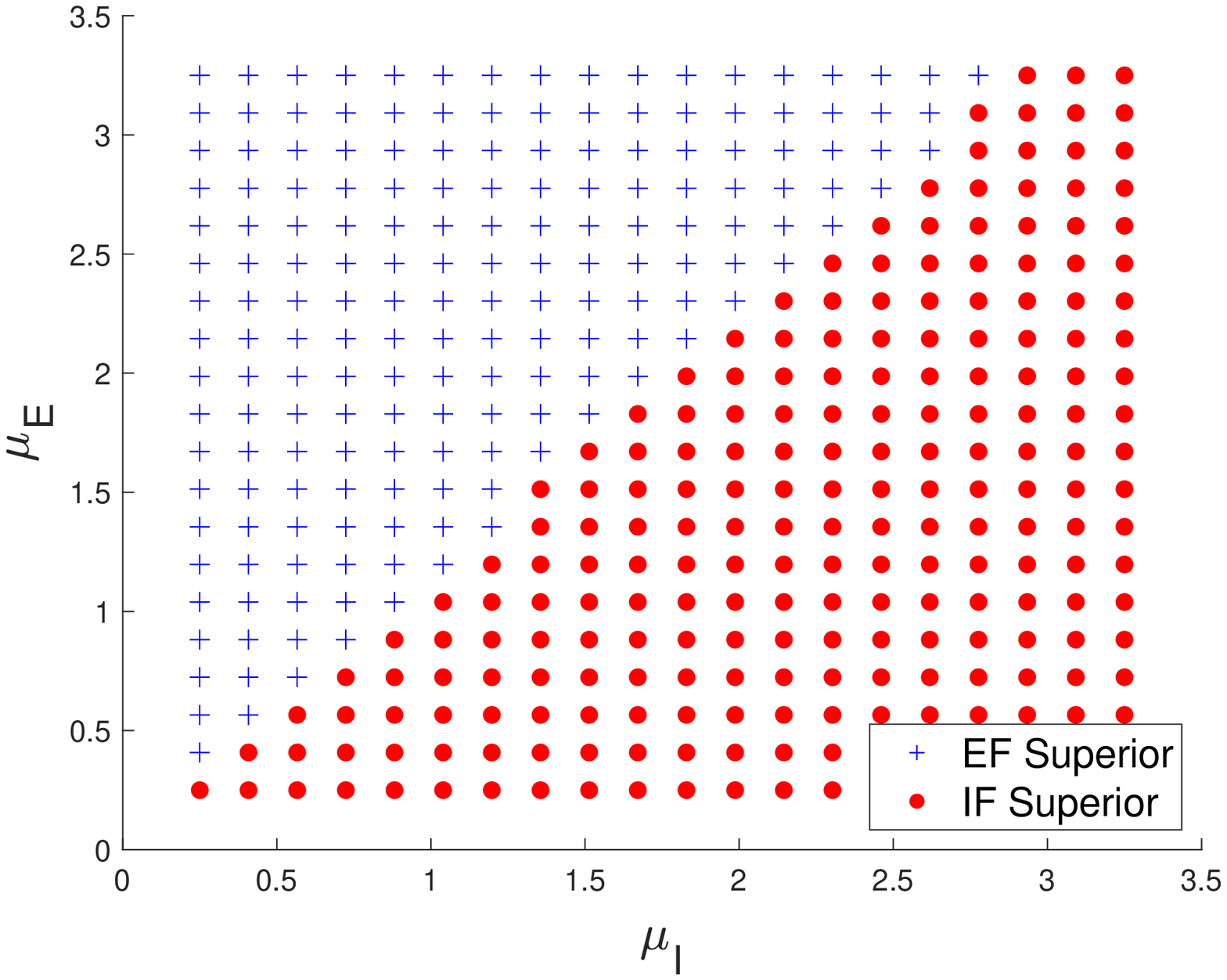}\label{fig:heat:9}} 
    \caption{Heat maps showing the relative performance of \Policy{} and \EPolicy{} as a function of $\mu_I$ and $\mu_E$ when $k=4$.
        We fix load $\rho$ and vary $\mu_I$ and $\mu_E$.
        To offset the changes to $\mu_I$ and $\mu_E$, we change $\lambda_I$ and $\lambda_E$ to keep $\rho$ constant.
        In every graph, $\lambda_I=\lambda_E$.
        The red circles represent settings where \Policy{} dominates \EPolicy{}.
        The blue $+$'s represent cases where \EPolicy{} dominates \Policy{}.
        As $\rho$ increases, the region where \EPolicy{} dominates \Policy{} grows.
        However, as expected, when $\mu_I \geq \mu_E$ \Policy{} dominates \EPolicy{} for all loads.
    }

\label{fig:heat}
\end{figure*}

\begin{figure*}[h!]
\centering
    \subfloat[Low load, $\rho=.5$]{\includegraphics[width=.33\textwidth]{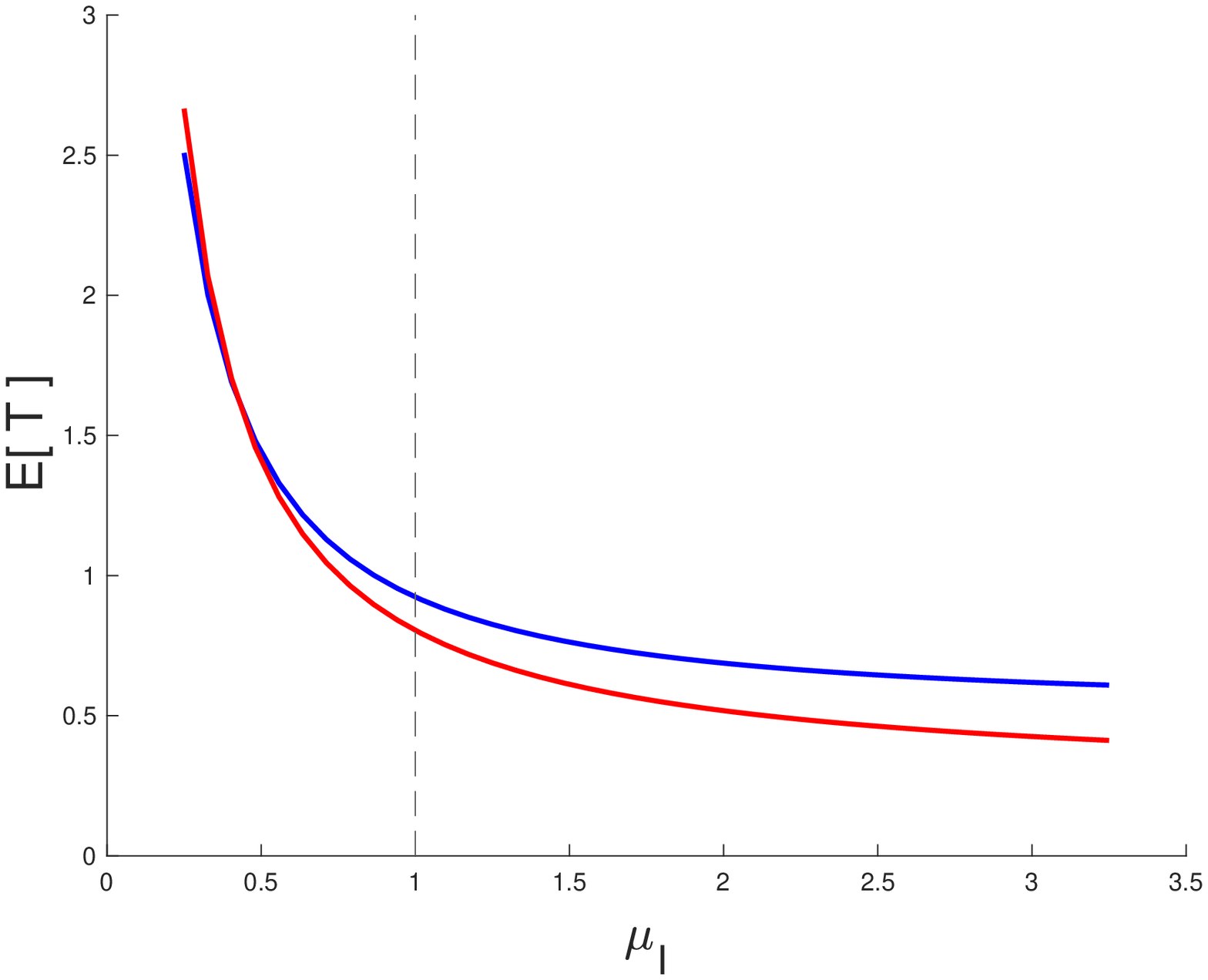}\label{fig:line:5} }
    \subfloat[Med. load, $\rho=.7$]{\includegraphics[width=.33\textwidth]{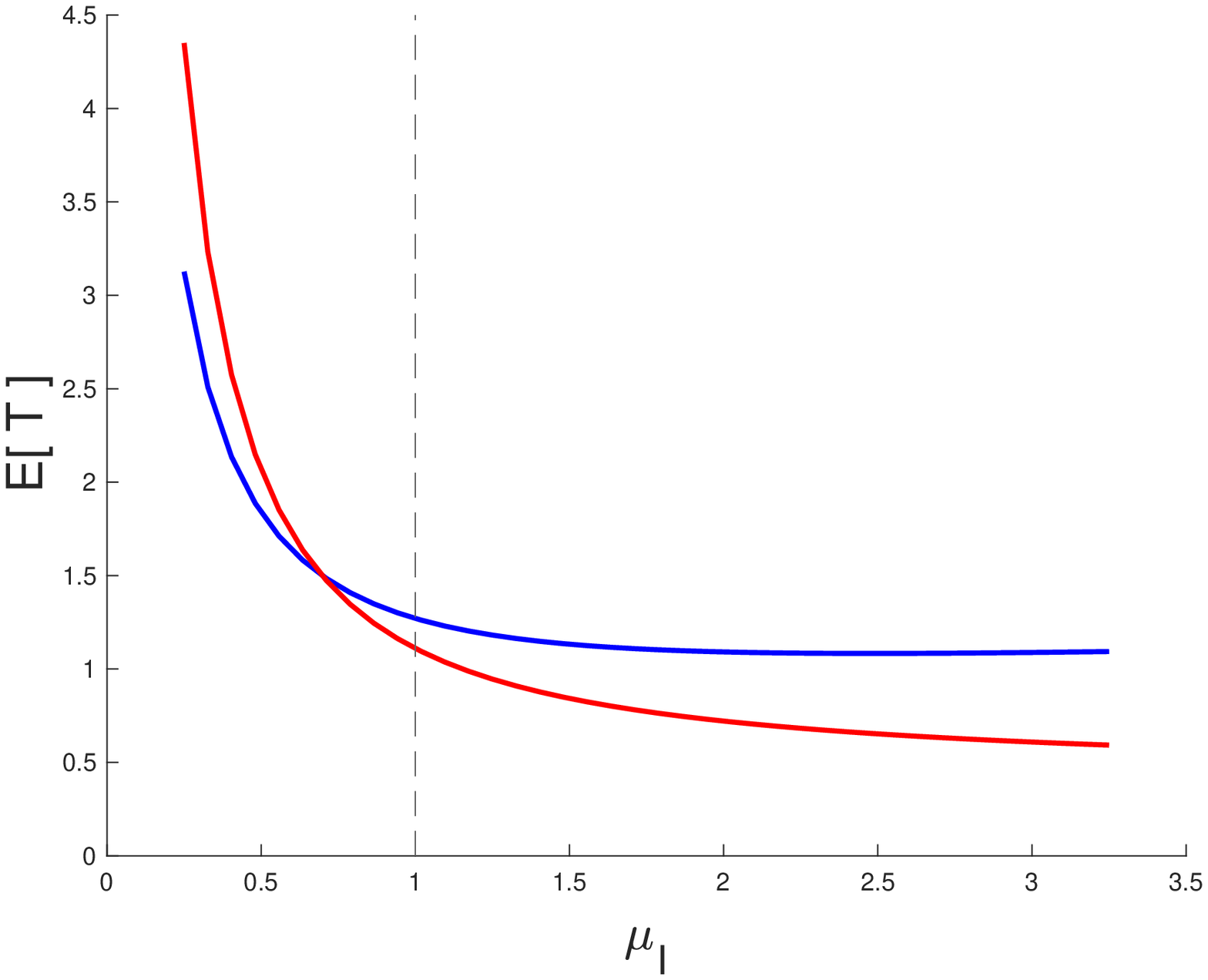}\label{fig:line:7}} 
    \subfloat[High load, $\rho=.9$]{\includegraphics[width=.33\textwidth]{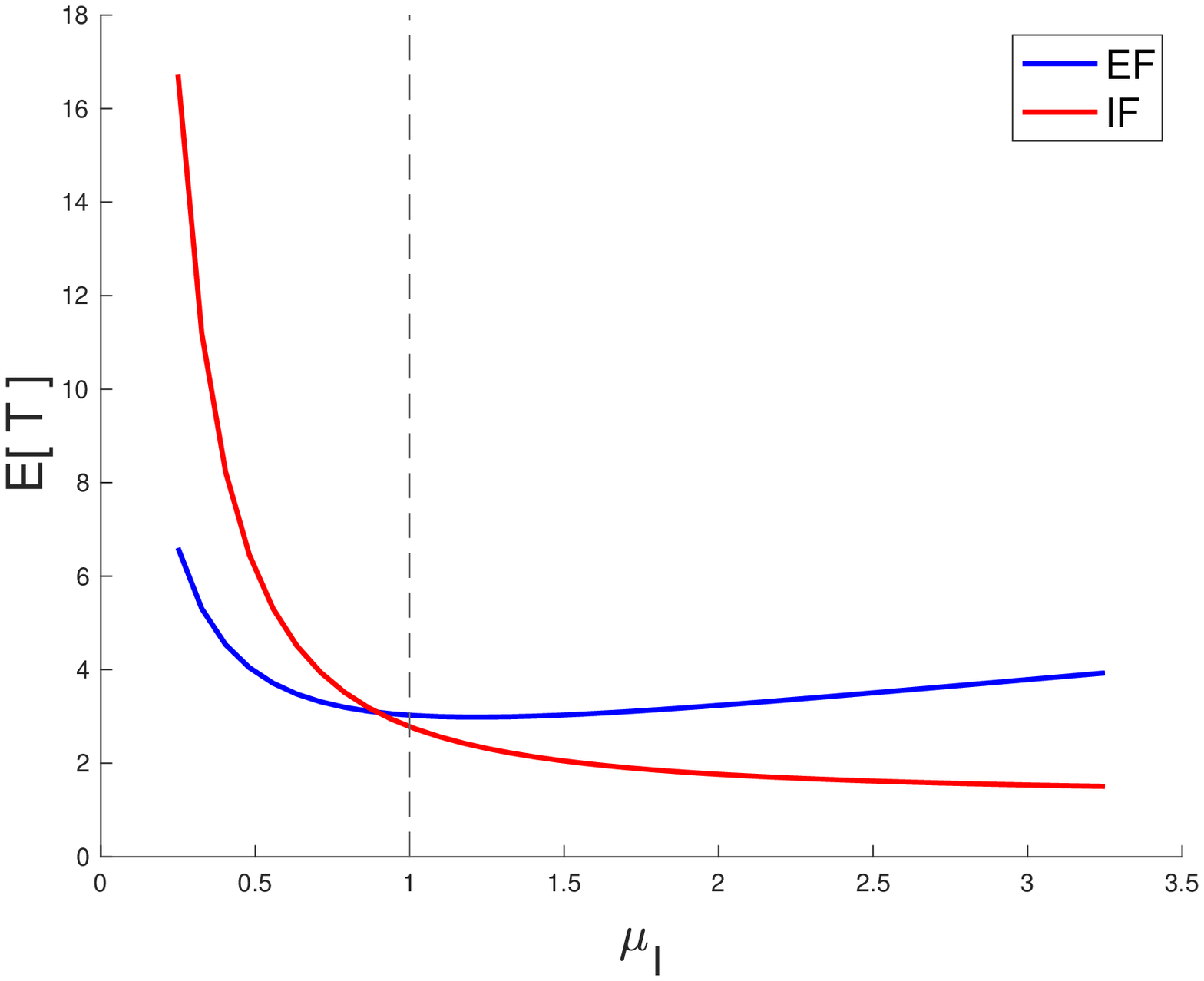}\label{fig:line:9}} 
    \caption{Graphs showing the absolute mean response times under \Policy{} and \EPolicy{} as a function of $\mu_I$ when $k=4$.
        In each graph, we fix system load, $\rho$, and set $\mu_E=1$.
        We then vary $\mu_I$.
        To offset the changes in $\mu_I$, we change $\lambda_I$ and $\lambda_E$ to keep $\rho$ constant.
        In every graph, $\lambda_I=\lambda_E$.
        The dotted lines at $\mu_I=1$ denote the case where $\mu_I=\mu_E$.
        Thus \Policy{} is optimal to the right of this line, while \EPolicy{} may dominate \Policy{} to the left of this line.
        We see that the allocation policy has a major impact on mean response time.
    }
\label{fig:line}
\end{figure*}

\begin{figure*}[h!]
\centering
    \subfloat[$\mu_I=.25$, $\mu_E=1$]{\includegraphics[width=.33\textwidth]{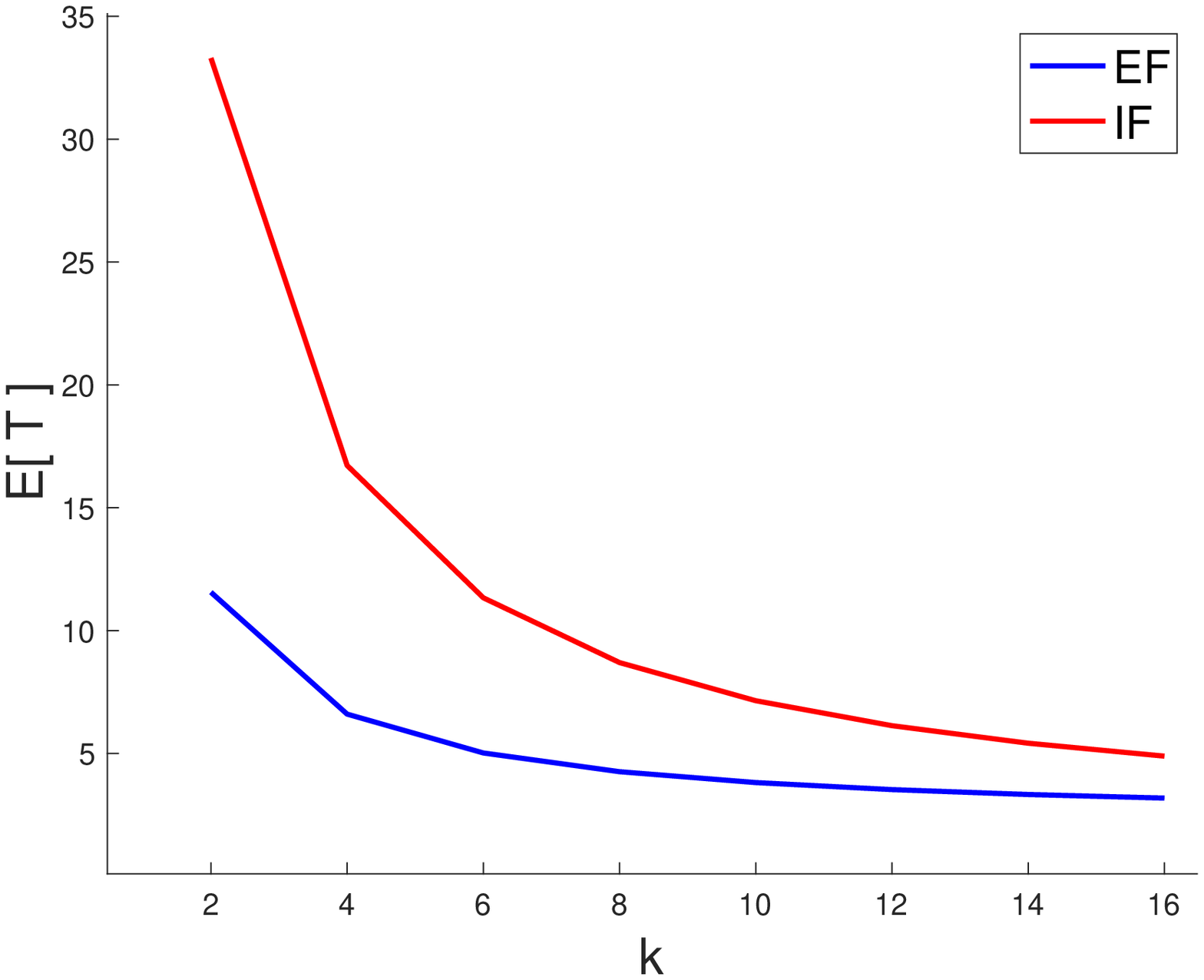}\label{fig:line:5} }
    \qquad
    \subfloat[$\mu_I=3.25$, $\mu_E=1$]{\includegraphics[width=.33\textwidth]{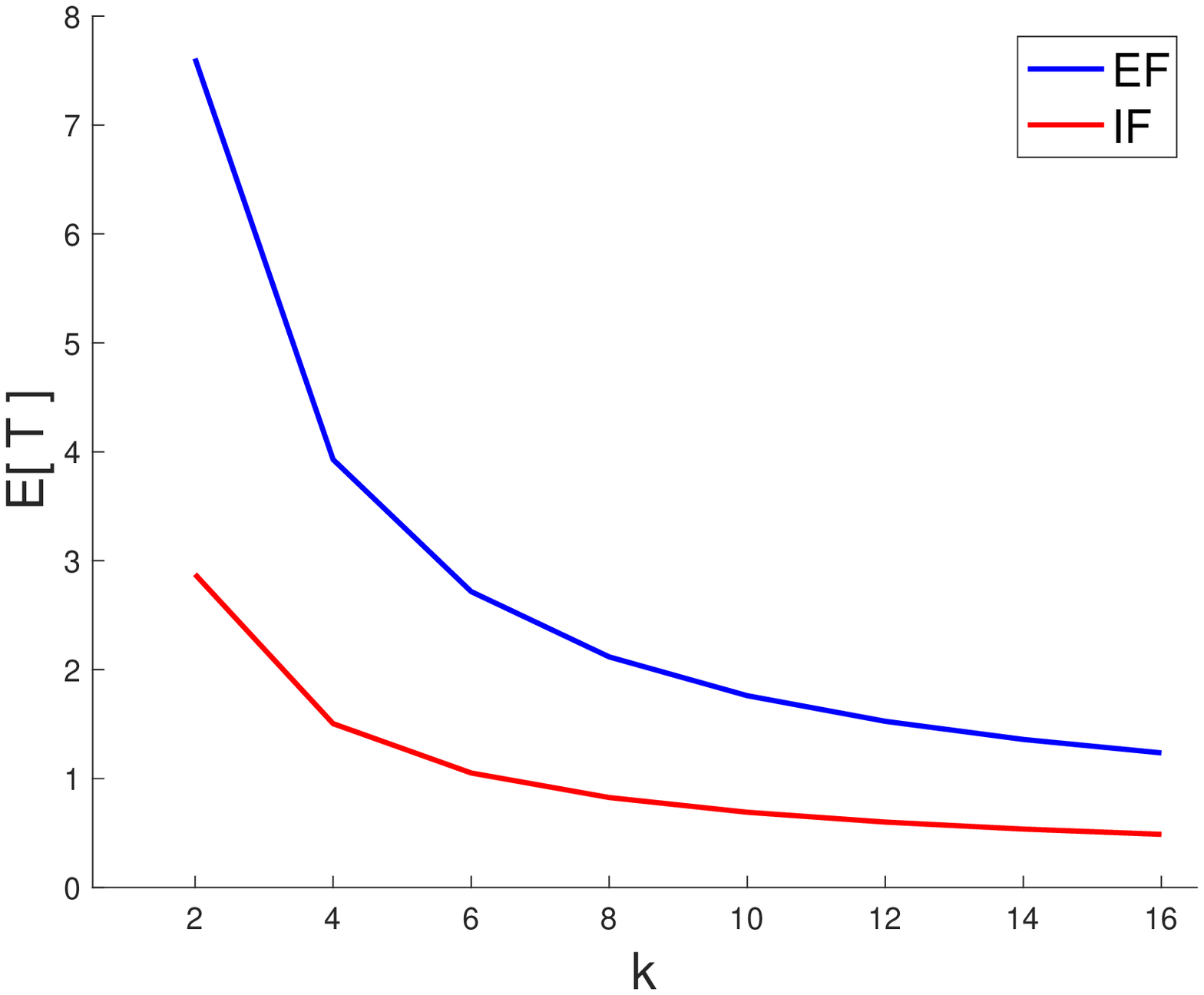}\label{fig:line:5} }
    \caption{Graphs showing the mean response time under \Policy{} and \EPolicy{} as a function of the number of servers, $k$ under high load ($\rho=0.9$).
    The values of $\mu_I$ and $\mu_E$ are chosen to represent the extreme ends of Figure \ref{fig:line:9} (where the performance gap between the policies is the largest).
    Even when $k=16$, the difference between \Policy{} and \EPolicy{} remains large.
    }
\label{fig:highk}
\end{figure*}

Figure~\ref{fig:heat} presents a high-level view of our results, showing only the relative performance of \Policy{} and \EPolicy{} as the system load, $\rho$, is moved from (a) low load to (b) medium load to (c) high load.  
In every case, \Policy{} outperforms \EPolicy{} when $\mu_I \geq \mu_E$, as expected from the optimality of $\Policy{}$ in this region.
When $\mu_I < \mu_E$, Figure~\ref{fig:heat} shows us that \EPolicy can outperform \Policy, and that the region where \EPolicy{} is better grows as $\rho$ increases.

Figure \ref{fig:line} shows the absolute mean response times under \Policy{} and \EPolicy{} as a function of $\mu_I$.
We again examine the system under various fixed values of $\rho$.
The dotted lines at $\mu_I=1$ denote the case where $\mu_I=\mu_E$.
We therefore know that \Policy{} is optimal to the right of this line in every graph, while \EPolicy{} may dominate \Policy{} to the left of this line.
We see that our choice of allocation policy has a major impact on mean response time.

While Figures \ref{fig:heat} and \ref{fig:line} assume that $k=4$, our analysis works equally well with any number of servers, $k$.  
Figure \ref{fig:highk} shows how the mean response time under \Policy{} and \EPolicy changes as $k$ increases while system load, $\rho$, remains constant.
\subsection{Markov Chains for \Policy{} and \EPolicy{}}
\label{sec:ctmc}
Figure \ref{fig:elastic_full} shows the Markov chain which exactly describes \EPolicy{}.
The corresponding \Policy{} chain is given in Appendix \ref{sec:ifctmc}.
Recall that the state $(i,j)$ denotes having $i$ inelastic jobs and $j$ elastic jobs in the system.
This chain is infinite in 2 dimensions -- the number of inelastic jobs and the number of elastic jobs.
Because there is no general method for solving 2D-infinite Markov chains, we provide a technique for converting this chain to a 1D-infinite Markov chain in Section \ref{sec:chain_conversion}.

%\begin{figure*}
%    \subfloat[]{
%        \includestandalone[mode=buildnew,width=.45\textwidth]{figures/arrow_i_busy}
%    }
%
%    \caption{The 1D-infinite chains for \Policy{} (left) and \EPolicy{} (right). In the left chain, when there are more than $k - 1$ inelastic jobs in system, we simulate an $M/M/1$ busy period with arrival rate $\lambda_I$ and service rate $k\mu_I$ using a two phase Coxian distribution. Likewise, in the right chain, when there is at least one elastic job in the system, we simulate an $M/M/1$ busy period with arrival rate $\lambda_E$ and service rate $k\mu_E$, once again with a two phase Coxian distribution.}
%    \label{fig:chain_busy}
%\end{figure*}

\subsection{Converting From 2D-Infinite to 1D-Infinite}
\label{sec:chain_conversion}
    
We start by describing how to reduce the dimensionality of the Markov chain for \EPolicy{}.
To do this, we make three key observations about its structure.

\noindent\textbf{Observation 1: Response time of elastic jobs is trivial.}
Under \EPolicy{}, elastic jobs have preemptive priority over inelastic jobs.
Thus, their behavior is independent of the state of inelastic jobs in the system.
We can therefore model the response time of elastic jobs as an $M/M/1$ queueing system with arrival rate $\lambda_E$ and service rate $k\mu_E$, which is well understood in the queueing literature \cite{kleinrock1976queueing}.
What remains is to understand the response time of the inelastic jobs.

\noindent\textbf{Observation 2: The busy period transformation.} 
Looking at Figure \ref{fig:elastic_full}, we notice that the chain has a repeating structure when there is at least 1 elastic job in the system ($j \geq 1$).
We leverage this repeating structure to reduce the Markov chain for \EPolicy{} to a 1D-infinite chain.
Specifically, while there are elastic jobs in the system, \EPolicy{} does not process any inelastic jobs.
The length of time where \EPolicy{} is not processing any inelastic jobs can be viewed as an $M/M/1$ \emph{busy period}.
In an $M/M/1$ system, a busy period is defined to be the time between when a job arrives into an empty system until the system empties.
In our case, this busy period is the time from when an elastic job arrives into a system with no elastic jobs until the system next has 0 elastic jobs.
In Figure \ref{fig:elastic_busy}, we show how to the entire portion of the Markov chain where $j \geq 0$ with a set of special states which represent the duration of an $M/M/1$ busy period for the elastic jobs.

\noindent\textbf{Observation 3: Creating 1D chain for inelastic jobs.}
Looking at Figure \ref{fig:elastic_busy}, we note the bolded transition arrows (labeled ``B'') emanating from the busy period states.
Because the duration of an $M/M/1$ busy period is not exponentially distributed, we must replace these special transitions with a mixture of exponential states (a Coxian distribution) which accurately approximates the duration of a busy period.
A technique for matching the first three moments of the busy period with a Coxian is given in \cite{PerfEval06}.
The 1D-infinite chain resulting from this technique is described in Figure \ref{fig:elastic_1d}.

We use the same three-step technique to make an analogous simplification of the Markov chain for \Policy{} (see Appendix \ref{sec:ifctmc}).

%\noindent\textbf{Observation 1:} Under \Policy{}, inelastic jobs have preemptive priority over inelastic jobs. Thus, they experience an $M/M/k$ queueing system with arrival rate $\lambda_I$ and service rate $\mu_I$. Once again, the exact analysis of these elementary queues is well-known \cite{}. Thus, in this case, we just need to compute the mean response time of elastic jobs.
%
%\noindent\textbf{Observation 2:} If there are $k$ or more inelastic jobs in the system, \Policy{} does not process any elastic jobs. The period from when the $k$th inelastic job arrives until there are only $k - 1$ inelastic jobs is once again an $M/M/1$ busy period. This time, the busy period has arrival rate $\lambda_I$ and service rate $k\mu_I$.
%
%\noindent\textbf{Observation 3:} We can once again apply the results from \cite{} in order to model these inelastic busy periods in our Markov chain.

%Applying our above observations to the 2D-infinite Markov chains for \Policy{} and \EPolicy{}, we can obtain corresponding 1D-infinite chains, depicted in Figure~\ref{fig:chain_busy}.

Given these 1D-infinite chains, we now apply standard matrix analytic techniques to solve for mean response time.

\subsection{Matrix Analytic Method}
\label{sec:matrix}
We now explain how to analyze \Policy{} and \EPolicy{} using the 1D-infinite Markov chains developed in the previous section.
    We do this by applying matrix analytic methods \cite{Neuts81,LatoucheRamaswami99,Neut89}.
    Matrix analytic methods are iterative procedures which compute the stationary distribution of a repeating, 1D-infinite Markov chain.

    Consider, for example, Figure \ref{fig:elastic_1d} which shows the 1D-infinite chain for \EPolicy{}.
    Observe that each column of this chain, after the first column, has identical transitions.
    The idea of matrix analytic methods is to represent the stationary distribution of column $j+1$ as a product of the stationary distribution of column $j$ and some unknown matrix $R$.
    The matrix $R$ is determined iteratively through a numeric procedure \cite{Neuts81,LatoucheRamaswami99,Neut89}.
    This procedure yields the stationary distribution of the chain.
    Using the stationary distribution we can easily determine the mean number of inelastic jobs, and hence the mean response time for inelastic jobs (recall that the response time for elastic jobs under \EPolicy{} is trivial).

    An analogous argument can be applied to solve the 1D-infinite chain for \Policy{}.

\section{Conclusion}
\label{sec:conclusion}
In this paper, we establish optimality results and provide the first analysis of policies for scheduling jobs which are heterogeneous with respect to their parallelizability.   Specifically, we study a model where jobs are either inelastic or elastic: inelastic jobs can only run on a single server and elastic jobs parallelize linearly across many servers.  We prove that the policy \emph{Inelastic-First} (\Policy{}), which gives inelastic jobs preemptive priority over elastic jobs, is optimal for minimizing the mean response time across jobs in the common case where elastic jobs are larger on average than inelastic jobs.  We then provide analysis of mean response time under the Elastic-First (\EPolicy{}) and Inelastic-First (\Policy{}) policies.  Our techniques include a novel sample path argument for proving stochastic dominance, and a method for solving 2D-infinite Markov chains.

There are many open questions in scheduling jobs which are heterogeneous with respect to their parallelizability.  One immediate follow-up of our work is to find optimal policies when elastic jobs are smaller on average than inelastic jobs.   We show in this paper that in this setting \EPolicy{} can outperform \Policy{}; however it's not clear that \EPolicy{} is the optimal allocation policy.   Furthermore, the model studied in this paper can be generalized in many ways to capture a broad range of application scenarios.  For example, one can consider a model where the elastic jobs are not fully elastic as in this paper, but are elastic up to a certain number of servers.  More generally, we can have more than two classes of jobs with different levels of parallelizability and different job size distributions.  The problem of finding optimal policies and providing analysis in these models is wide open.

\clearpage
\appendix
\section*{Appendix}
\section{Approximation when Jobs Arrive at the Same Time}
\label{sec:4_approx}
\newcommand{\work}{U}

In this section we show that a generalization of SRPT-k is a 4-approximate algorithm for mean response time if all jobs arrive at the same time.  This case is entirely deterministic. This result generalizes beyond elastic and inelastic jobs; in particular, this result holds even in more general parallelizability settings where every job $j$ is parallelizable up to $k_j$ processors.  That is, if job $j$ is given $k' \leq k$ processors, the rate it is processed is $\min \{k_j,k'\}$.

To prove the theorem, we will use a dual fitting analysis.  Consider the following LP relaxation of the problem. In the following, we use $x_j$ to denote the inherent size of job $j$.  The variable $y_{j,t}$ is how much job $j$ is processed at time $t$.

\[ \min_{\{y_{jt}\}} \ \ \ \sum_{j}  \sum_{t \geq 0}  \left(\frac{t}{x_{j}} + \frac{1}{2 k_j} \right) \cdot y_{jt}  \qquad \lpp \label{primal2} \]
\[ \begin{array}{rcllr}
\displaystyle 	 \sum_{t \geq 0}  y_{jt} &\geq&  x_{j} \qquad &\forall j   \\ 
\displaystyle	\sum_{j \, : \, t \geq 0} y_{jt} &\leq& k &\forall    t      \\
\displaystyle	y_{jt} &\geq& 0  &\forall  j,  t \, : \, t \geq 0 \qquad 
\end{array}  \]

It is easy to show that the above LP lower bounds the optimal flow time of a feasible schedule. This is essentially an LP for a $k$ speed single machine plus the standard corrective term in the objective.  See \cite{ChadhaGKM09} for similar relaxations.  The dual of $\lpp$ is as follows. 

\[ \max_{\{\alpha_j\},\{\beta_t\}} \ \ \ \sum_{j} \alpha_j -   \sum_t \beta_{t} \qquad \lpd \label{Dual} \]
\[ \begin{array}{rcllr}
\displaystyle \;\;\;\;	\frac{ \alpha_j}{x_{j}} - \frac{\beta_{t}}{k} &\leq& \displaystyle \frac{t}{x_{j}} +\frac{1}{2k_j} \qquad &\forall  j, t \, : \, t \geq 0  \label{dual-constraint}\\ 
\displaystyle  \;\;\;\;	\alpha_j &\geq& 0  &\forall   j \\ 
\displaystyle  \;\;\;\;	\beta_{t} &\geq& 0  &\forall   t  
\end{array}  \]

The algorithm that will be used is a natural generalization of SRPT-k to the case of parallelizable jobs.  The algorithm sorts the jobs according to their inherent size in increasing order.  For the rest of the analysis we assume that the jobs are in this order such that $x_1 \leq x_2 \leq \ldots x_n$, where $n$ is the total number of jobs .  At any point in time, the algorithm gives the cores to the jobs in this priority order.  Each job $j$ is assigned up to $k_j$ processors and then the algorithm considers the next job in the list with the remaining processors. We let $\work_j = \sum_{i=1}^{j-1}  x_i$ be the total amount of work strictly ahead of job $j$.

To analyze the algorithm, we will assume the processors the algorithm has are of speed $s \geq 1$.  Later we will set $s=2$.  That is, each processor completes $s$ units on a job each timestep it works on a job.    We compare to an optimal solution with one speed processors.  The following theorem allows us to do this with minimal loss in the approximation ratio.  This allows us to compare to the slower optimal solution.

\begin{lemma}[\hspace{1sp}\cite{GuptaMUX17}]\label{claim:speed}
Let $\opt_s$  denote the value of the total response time of the optimal algorithm where the optimal algorithm has processors of speed $s$. Then for any $s \geq 1$, $$\opt_1 \leq s\opt_s.$$
\end{lemma}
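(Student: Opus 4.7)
The plan is to show the inequality by an explicit time-dilation construction: take the optimal schedule at speed $s$ and slow it down by a factor of $s$ to produce a feasible schedule at speed $1$ whose total response time is at most $s \cdot \opt_s$.

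Concretely, let $\sigma_s$ be an optimal schedule at processor speed $s$ with objective value $\opt_s$, and suppose $\sigma_s$ assigns $a_j(\tau) \geq 0$ processors to job $j$ at time $\tau$, where $\sum_j a_j(\tau) \leq k$ and $a_j(\tau) \leq k_j$ at every time $\tau$. Define a speed-$1$ schedule $\sigma_1$ by $a'_j(t) := a_j(t/s)$. Because $\sigma_s$ respects the total processor capacity and the per-job parallelizability caps at every instant, $\sigma_1$ does as well, so $\sigma_1$ is feasible for the speed-$1$ machine.

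Next I would verify that each job completes in $\sigma_1$ at time exactly $s$ times its completion time in $\sigma_s$. The instantaneous processing rate of job $j$ under $\sigma_s$ is $s \cdot \min\{k_j, a_j(\tau)\}$, while under $\sigma_1$ it is $\min\{k_j, a'_j(t)\} = \min\{k_j, a_j(t/s)\}$. Substituting $u = t/s$,
\begin{equation*}
\int_0^{s C_j} \min\{k_j, a_j(t/s)\}\, dt \;=\; s \int_0^{C_j} \min\{k_j, a_j(u)\}\, du \;=\; x_j,
\end{equation*}
where $C_j$ is the completion time of $j$ in $\sigma_s$. Hence the completion time of $j$ in $\sigma_1$ is $s C_j$, so summing over all jobs (which all arrive at time $0$, making response time equal to completion time) gives a total response time of $s \cdot \opt_s$ for $\sigma_1$. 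Since $\sigma_1$ is a feasible speed-$1$ schedule, $\opt_1 \leq s \cdot \opt_s$.

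Since the statement is attributed to prior work and the construction is essentially a rescaling of time, I do not anticipate any real obstacle. The only subtlety worth stating carefully is that the rescaling preserves both the aggregate capacity constraint $\sum_j a'_j(t) \leq k$ and the per-job parallelizability bound $a'_j(t) \leq k_j$; without the latter, one might worry that dilating time could let a job absorb more work than its speedup curve allows, but the pointwise identity $a'_j(t) = a_j(t/s)$ inherits both constraints directly from $\sigma_s$.
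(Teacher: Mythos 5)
Your proof is correct. Note, however, that the paper does not prove this lemma at all: it is stated as a black-box citation to prior work (\cite{GuptaMUX17}), so there is no in-paper argument to compare against. Your time-dilation construction is the standard proof of such speed-scaling inequalities, and the details check out: the rescaled allocation $a'_j(t)=a_j(t/s)$ inherits both the aggregate capacity bound and the per-job cap, the change of variables correctly shows each completion time scales by exactly $s$, and since all jobs arrive at time $0$ the objective scales by $s$ as well. The one point worth stating explicitly is that in this model over-allocating beyond $k_j$ is permitted but yields rate only $\min\{k_j,k'\}$, which your use of $\min\{k_j,a_j(\cdot)\}$ inside the integral already handles.
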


We now define the dual variables.  Let $Q(t)$ denote the set of jobs released and unsatisfied at time $t$ in the algorithm's schedule.   Let $\alpha_j = \frac{\work_j}{ks} + \frac{x_j}{sk_j}$ and let $\beta_t = \frac{1}{s} |Q(t)|$.  Our main claim is the following.

\begin{lemma}\label{lem:mainlem}
Let $C$ denote the algorithm's total completion time.  It is the case that $ \sum_j \alpha_j - \sum_t \beta_t  \geq \left(1-\frac{1}{s}\right)C$.  Moreover, $\alpha,\beta$ correspond to a feasible dual solution when $s=2$. 
\end{lemma}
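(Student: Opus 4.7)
The plan is to prove the two claims separately: the objective-value inequality $\sum_j \alpha_j - \sum_t \beta_t \geq (1 - 1/s) C$ for general $s \geq 1$, and dual feasibility at $s = 2$. For the objective piece, start with the identity $\sum_t \beta_t = \frac{1}{s}\int_0^\infty |Q(t)|\,dt = \frac{1}{s}\sum_j C_j = \frac{C}{s}$ (since $j \in Q(t)$ exactly on $[0, C_j)$). This reduces the whole claim to showing $\sum_j \alpha_j \geq C$, which I would derive from the stronger per-job bound $C_j \leq \alpha_j$ and sum.

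To prove $C_j \leq \alpha_j = \frac{\work_j}{ks} + \frac{x_j}{sk_j}$, I would partition $[0, C_j]$ into $T_{\text{full}}$, the total time during which job $j$ is actually allocated all $k_j$ of its processors, and $T_{\text{rest}} = C_j - T_{\text{full}}$. The key structural observation about the algorithm is that processors are assigned strictly in the priority order $1, 2, \ldots$, so whenever $j$ receives fewer than $k_j$ processors the processor budget must have been exhausted before $j$ was considered; in particular, all $k$ processors are allocated, and all of them are committed to jobs $i \leq j$ (lower-priority jobs only receive leftover budget, which is zero). Hence on $T_{\text{rest}}$, work on jobs $i \leq j$ accumulates at rate $ks$, while on $T_{\text{full}}$ job $j$ is processed at rate $k_j s$, giving $k_j s \cdot T_{\text{full}} \leq x_j$. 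Combined with the trivial upper bound $\work_j + x_j$ on the total inherent work of jobs $i \leq j$, the three inequalities chain to $ks\, C_j - (k - k_j)s\, T_{\text{full}} \leq \work_j + x_j$; substituting $T_{\text{full}} \leq x_j/(k_j s)$ and using $k \geq k_j$ yields $ks\, C_j \leq \work_j + kx_j/k_j$, i.e., $C_j \leq \alpha_j$.

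For dual feasibility at $s = 2$, plug $\alpha_j = \frac{\work_j}{ks} + \frac{x_j}{sk_j}$ and $\beta_t = |Q(t)|/s$ directly into the constraint $\alpha_j/x_j - \beta_t/k \leq t/x_j + 1/(2k_j)$. The $\frac{1}{sk_j}$ contribution from $\alpha_j/x_j$ cancels exactly the $\frac{1}{2k_j}$ on the right-hand side when $s = 2$, and the constraint collapses to $\work_j \leq 2kt + x_j |Q(t)|$. To verify this, split $\work_j = \sum_{i < j,\, i \notin Q(t)} x_i + \sum_{i < j,\, i \in Q(t)} x_i$. The first sum equals the inherent work of completed higher-priority jobs, which is at most the algorithm's total work done by time $t$, namely $skt = 2kt$. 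The second sum is at most $x_j |Q(t)|$ using $x_i \leq x_j$ for $i < j$ and the counting bound $|\{i < j : i \in Q(t)\}| \leq |Q(t)|$.

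The main obstacle is the per-job bound $C_j \leq \alpha_j$; the feasibility check, once the correct reformulation $\work_j \leq 2kt + x_j |Q(t)|$ is isolated, is immediate from the sortedness $x_1 \leq x_2 \leq \cdots$ and the algorithm's total processing capacity. The crux of the $C_j$ bound is the structural fact that priority-order allocation turns every unit of time that $j$ is short-changed into full-rate $ks$ progress on the smaller jobs whose total size is $\work_j$, so $j$ cannot be delayed by more than the time it takes to process $\work_j$ on $k$ processors plus its own stand-alone processing time $x_j/k_j$, both scaled by $1/s$.
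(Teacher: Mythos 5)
Your proof is correct and follows essentially the same route as the paper: bound $\sum_t \beta_t = C/s$, establish the per-job bound $C_j \le \alpha_j$, and verify dual feasibility by noting that the total size of completed higher-priority jobs is at most $skt$ while the uncompleted ones are absorbed by the $x_j|Q(t)|$ term. Your accounting for $C_j \le \alpha_j$ (splitting $[0,C_j]$ into $T_{\text{full}}$ and $T_{\text{rest}}$ and crediting the $(k-k_j)s\,T_{\text{full}}$ surplus) is in fact a more careful write-up of the one-sentence argument the paper gives for the same inequality.
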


 The majority of the section will be devoted to proving this lemma. We first observe that this is sufficient to prove our theorem.
 
 \begin{theorem}
 The SRPT-k algorithm is a $4$-approximation for mean response time when all jobs arrive at time $0$. 
 \end{theorem}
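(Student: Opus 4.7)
The plan is to apply weak LP duality to the dual pair $(\alpha,\beta)$ constructed in Lemma~\ref{lem:mainlem}, and then translate the resulting bound on the algorithm with speed-$2$ processors into a bound on the algorithm with speed-$1$ processors.

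First, I would argue that $\lpp$ is a valid relaxation of scheduling on $k$ speed-$1$ processors: any feasible schedule induces a primal solution whose objective lower-bounds its total response time, so $\lpp^\ast \leq \opt_1$. Invoking Lemma~\ref{lem:mainlem} with $s=2$, the constructed $(\alpha,\beta)$ is dual feasible and satisfies
\begin{equation*}
\sum_j \alpha_j \,-\, \sum_t \beta_t \;\geq\; \tfrac{1}{2}\, C,
\end{equation*}
where $C$ is SRPT-k's total response time when it runs on speed-$2$ processors. Weak duality together with $\lpp^\ast \leq \opt_1$ then yields $\tfrac{1}{2}C \leq \opt_1$, i.e.\ $C \leq 2\,\opt_1$.

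Next, I would relate $C$ to the actual cost of SRPT-k on speed-$1$ processors. Since all jobs are released at time $0$ and SRPT-k's priority order (and its per-job processor caps $k_j$) depend only on the inherent sizes $x_j$, not on the processor speed, the schedule produced on speed-$2$ processors is identical to the one on speed-$1$ processors after a uniform rescaling of the time axis by a factor of $2$. Hence every completion time is exactly halved, so the speed-$1$ cost equals $2C$, and therefore is at most $4\,\opt_1$. (Equivalently, one can mirror the proof of Lemma~\ref{claim:speed} to show that the same speed-rescaling inequality holds for SRPT-k, not just for OPT.) This establishes the $4$-approximation.

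The main conceptual obstacle is the scale-invariance step, since the LP and the dual pair are defined with a specific $s$ baked in: one needs that running SRPT-k on speed-$2$ processors is truly the ``same algorithm'' as on speed-$1$, just time-compressed. In our setting, where every job arrives at time $0$ and priorities are size-based, this reduction is immediate, but it is the only place where a more general model (arbitrary release times, state-dependent decisions) would require additional care. Everything else in the proof is direct weak-duality bookkeeping against the explicit feasibility assertion in Lemma~\ref{lem:mainlem}.
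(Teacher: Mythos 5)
Your proof is correct and follows essentially the same route as the paper's: both combine Lemma~\ref{lem:mainlem} at $s=2$ with weak duality against \lpp{} to get that the speed-$2$ cost $C$ satisfies $C \leq 2\,\opt_1$, and then pay one more factor of $2$ for the change of speed. The only (immaterial) difference is that you rescale the algorithm's cost (speed-$1$ cost equals $2C$, valid since all jobs arrive at time $0$ and SRPT-k's priorities are speed-independent), whereas the paper rescales the optimum via Lemma~\ref{claim:speed} ($\opt_1 \leq 2\,\opt_2$); in this setting the two accountings are equivalent.
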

 \begin{proof}
 Set $s=2$.  Lemma~\ref{lem:mainlem} ensures that $C$ is at most a factor $2$ larger than the optimal solution using $1$ speed. Lemma~\ref{claim:speed} ensures that the $1$ speed optimal is within a factor $2$ of the $2$ speed optimal.  Together this shows the algorithm is a $4$ approximation. 
 \end{proof}

Now return to proving Lemma~\ref{lem:mainlem}.  We being by establishing the value of the objective function.  

\begin{lemma}\label{lem:obj}
$ \sum_j \alpha_j - \sum_t \beta_t  = \left(1-\frac{1}{s}\right)C.$
\end{lemma}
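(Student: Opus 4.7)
The plan is to evaluate the two sums $\sum_j \alpha_j$ and $\sum_t \beta_t$ separately and then combine them. For $\sum_t \beta_t$, substituting $\beta_t = |Q(t)|/s$ and exchanging the order of summation gives
\[
\sum_t \beta_t \;=\; \frac{1}{s}\int_0^\infty |Q(t)|\, dt \;=\; \frac{1}{s}\sum_j (C_j - r_j) \;=\; \frac{C}{s},
\]
since each job $j$ contributes $(C_j - r_j)$ to the integral and, in this setting, all release times $r_j$ are zero.

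It then remains to show $\sum_j \alpha_j = C$. Expanding by definition,
\[
\sum_j \alpha_j \;=\; \frac{1}{ks}\sum_j U_j \;+\; \frac{1}{s}\sum_j \frac{x_j}{k_j}.
\]
I would interpret the first term as the aggregate ``work ahead'' of each job processed collectively at the full rate $ks$, and the second term as the sum over $j$ of the minimum time for job $j$ to complete on its maximum of $k_j$ cores. To match this expression to $C = \sum_j C_j$, my plan is to analyze the SRPT-k schedule phase-by-phase: partition $[0, \max_j C_j]$ into subintervals on which the set of actively running jobs is constant, and re-assemble the contributions $U_j/(ks)$ and $x_j/(sk_j)$ across these phases. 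Combining the two evaluations then yields
\[
\sum_j \alpha_j - \sum_t \beta_t \;=\; C - \frac{C}{s} \;=\; \Bigl(1 - \tfrac{1}{s}\Bigr) C,
\]
as desired.

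The principal obstacle is the identity $\sum_j \alpha_j = C$. The computation of $\sum_t \beta_t$ reduces to a single standard integral identity, but $\sum_j \alpha_j = C$ is not pointwise—$\alpha_j \neq C_j$ for an individual $j$, since a job can run concurrently with higher-priority jobs whenever their combined $k_i$'s do not fill all $k$ cores. The argument must therefore aggregate contributions across all jobs with a careful amortization that exploits the priority-based core assignment of SRPT-k, so that the ``extra'' terms $U_j/(ks)$ for later jobs exactly compensate for the concurrent processing of earlier jobs.
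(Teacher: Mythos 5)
There is a genuine gap, and moreover the identity you are aiming for is false. Your computation of $\sum_t \beta_t = C/s$ is fine and matches the paper. But $\sum_j \alpha_j = C$ does not hold in general: take $k=2$, $s=1$, two jobs with $k_1=k_2=1$ and $x_1=x_2=1$. Both run concurrently and finish at time $1$, so $C = C_1 + C_2 = 2$, while $\alpha_1 = \tfrac{0}{2}+\tfrac{1}{1}=1$ and $\alpha_2 = \tfrac{1}{2}+\tfrac{1}{1}=\tfrac{3}{2}$, giving $\sum_j\alpha_j = \tfrac{5}{2} \neq 2$. No amount of phase-by-phase amortization will rescue an equality that fails on this example, so the ``principal obstacle'' you identify is not merely unproven --- it is unprovable.

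The correct route, and the one the paper takes, is a per-job \emph{inequality}: $\alpha_j = \tfrac{U_j}{ks}+\tfrac{x_j}{sk_j} \geq C_j$ for every $j$. The argument is local: at any time before job $j$ completes, either all $k$ processors are busy (at speed $s$) working on jobs $1,\dots,j$, or job $j$ receives its full $k_j$ processors. Writing $a+b=x_j$ for the work done on $j$ in the two modes, the time in the first mode is at most $(U_j+a)/(ks)$ and in the second at most $b/(sk_j)$; since $k \geq k_j$, the total is at most $\alpha_j$. Summing gives $\sum_j \alpha_j \geq \sum_j C_j = C$, hence $\sum_j\alpha_j - \sum_t\beta_t \geq \left(1-\tfrac{1}{s}\right)C$. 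The ``$=$'' in the statement of the lemma is effectively a typo: the paper's own proof only establishes ``$\geq$'', and that is all the enclosing Lemma~\ref{lem:mainlem} requires. You should restate your target as an inequality and replace the global amortization plan with this per-job bound.
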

\begin{proof}
First notice that $ \sum_t \beta_t =  \sum_t \frac{1}{s} |Q(t)|$.  This is precisely $\frac{1}{s}C$.  Thus, it is sufficient to prove $\frac{\work_j}{ks} + \frac{x_j}{k_j} \geq C$.  To do so, we show that $\frac{\work_j}{ks} + \frac{x_j}{sk_j} $ is an upper bound on job $j$'s response time.  Indeed, we know that either all $k$ processors are working on work in $\work_j + x_j$ with speed $s$ if $j$ is unsatisfied or job $j$ is being worked on with $k_j$ processors with speed $s$.
\end{proof}

Next we will show that this setting of the dual variables corresponds to a feasible dual solution. 

\begin{lemma} \label{lem:feasible}
The dual solution $\alpha, \beta$ is feasible when $s=2$.
\end{lemma}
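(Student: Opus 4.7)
The plan is to substitute the proposed dual variables into the dual constraint with $s=2$ and reduce feasibility to a clean bound on how much of the algorithm's remaining work can live in low-index jobs. Setting $s=2$ gives $\alpha_j = \frac{\work_j}{2k} + \frac{x_j}{2k_j}$ and $\beta_t = \frac{|Q(t)|}{2}$. Plugging these into
\[
\frac{\alpha_j}{x_j} - \frac{\beta_t}{k} \;\leq\; \frac{t}{x_j} + \frac{1}{2k_j},
\]
the $\frac{1}{2k_j}$ contributions cancel on both sides and, after multiplying through by $2kx_j$, the claim reduces to
\[
\work_j - |Q(t)|\,x_j \;\leq\; 2kt \qquad \text{for all } j \text{ and all } t \geq 0.
\]

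Next I bound the left hand side by the total amount of work $W(t)$ completed by the algorithm by time $t$. Since each of the $k$ processors runs at speed $s=2$, we trivially have $W(t) \leq 2kt$, so it suffices to show the stronger inequality $\work_j - |Q(t)|\,x_j \leq W(t)$. Writing $W(t) = \sum_i x_i - \sum_{i \in Q(t)} r_i(t)$, where $r_i(t)$ is the remaining size of job $i$ at time $t$, and using $\sum_i x_i - \work_j = \sum_{i \geq j} x_i$, this rearranges to
\[
\sum_{i \in Q(t)} r_i(t) \;\leq\; \sum_{i \geq j} x_i + |Q(t)|\,x_j.
\]

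The final step is where the SRPT-$k$ sort order enters. I would split $Q(t)$ into the jobs with $i < j$ and those with $i \geq j$. For $i < j$, the ordering $x_1 \leq x_2 \leq \cdots$ gives $r_i(t) \leq x_i \leq x_j$, so these jobs collectively contribute at most $|Q(t) \cap \{1,\dots,j-1\}| \cdot x_j \leq |Q(t)|\,x_j$. For $i \geq j$, I use $r_i(t) \leq x_i$, and the sum over these is at most $\sum_{i \geq j} x_i$. Adding the two bounds gives the required inequality.

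There is no substantive obstacle here: the argument is algebra plus a one-line partition bound on $Q(t)$ driven purely by the sort order. The one subtle point worth noting is that the choice $s=2$ is forced by this particular dual, since it is exactly what makes the $\frac{1}{2k_j}$ terms cancel and leaves the clean requirement $\work_j - |Q(t)|x_j \leq skt$, which then matches the trivial capacity bound $W(t) \leq skt$; any smaller $s$ would leave a deficit that this argument could not close.
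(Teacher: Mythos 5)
Your proof is correct and is essentially the paper's own argument in a cleaner packaging: rearranging the dual constraint to $U_j - |Q(t)|\,x_j \le 2kt$ (where $U_j$ is the work ahead of job $j$) and splitting $Q(t)$ by index is the same as the paper's step of cancelling the remaining work of the incomplete jobs ahead of $j$ (each at most $x_j$ by the sort order) against the $\beta_t$ term, and then bounding the processed work by the capacity $skt$. The two key ingredients --- the SRPT ordering and the speed-$s$ capacity bound --- are used identically, so there is no gap.
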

\begin{proof}
We need to show the following for all jobs $j$ and times $t \geq 0$:
$$\frac{ \alpha_j}{x_{j}} - \frac{\beta_{t}}{k} \leq  \frac{t}{x_{j}} +\frac{1}{2k_j}.$$

Consider the left hand side for a fixed job $j$ and time $t$. Let $x^r_{j'}(t)$ be the remaining work left on job $j'$ at time $t$ and $x^p_{j'}(t) = x_{j'} - x^r_{j'}(t)$ be the amount of job $j'$ that has been processed up to time $t$. This is equivalent to the following given the definitions of $\alpha$ and $\beta$:
\begin{eqnarray*}
&&\frac{1}{x_j}\left(\frac{\work_j}{ks} + \frac{x_j}{sk_j}\right) -\frac{1}{sk} |Q(t)|\\
&=&\frac{1}{x_j} \left (\frac{ 1}{ks}  \sum_{j' \in [n], x_{j'} < x_j } \left (x^p_{j'}(t) + x^r_{j'}(t) \right)  + \frac{x_j}{sk_j} \right) -\frac{1}{sk} |Q(t)|.
\end{eqnarray*}

Now consider any job that is in complete at time $t$.  That is, those in $Q(t)$.  We can remove these from the first term by combining terms with the $-\frac{1}{sk} |Q(t)|$ term.    The prior expression is only less than the following:

%et $\ind$ be an indicator variable that is $1$ if job $j$ is in $Q(t)$ and $0$ otherwise.   - \frac{x_j}{ks} \ind

\begin{eqnarray*}
&&\frac{1}{x_j} \left (\frac{ 1}{ks}  \sum_{j' \in [n] \setminus Q(t), x_{j'} < x_j } \left (x^p_{j'}(t) + x^r_{j'}(t) \right) + \frac{x_j}{sk_j} \right ) \\
&\leq&\frac{1}{x_j} \left (\frac{ 1}{ks}  \sum_{j' \in [n] \setminus Q(t), x_{j'} < x_j } \left (x^p_{j'}(t) \right)   + \frac{x_j}{sk_j} \right ).\\
&&\qquad [\mbox{$x^r_{j'}(t)= 0$ for jobs completed at $t$}]
\end{eqnarray*}

Notice that $  \sum_{j' \in [n] \setminus Q(t), x_{j'} < x_j } \left (x^p_{j'}(t) \right) $ is less than $kst $.  This is because the summation is counting work that the algorithm has processed by time $t$. The algorithm has $k$ processors of speed $s$.  Thus, the prior term is less than the following:
\begin{eqnarray*}
\frac{t}{x_j} + \frac{1}{sk_j}.
\end{eqnarray*}
Given the $s=2$, we get that the dual solution is feasible.
\end{proof}

Together Lemmas \ref{lem:obj} and \ref{lem:feasible} complete the proof.

\section{Idling Policies}
\label{sec:idle}
We define a policy to be \emph{idling} if it chooses to leave one or more servers idle rather than allocating them to some eligible jobs.

\begin{theorem}
    \label{thm:idle}
    For any policy $\pi$ which unnecessarily idles servers there exists a non-idling policy $\pi'$ such that
    $$\mathbb{E}[T^{\pi'}] \leq \mathbb{E}[T^\pi].$$
    Hence, there exists an optimal policy which is non-idling.
\end{theorem}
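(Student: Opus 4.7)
The plan is to construct, for any idling policy $\pi$, an explicit non-idling policy $\pi'$ with $\mathbb{E}[N^{\pi'}] \leq \mathbb{E}[N^\pi]$, and then invoke Little's Law to conclude $\mathbb{E}[T^{\pi'}] \leq \mathbb{E}[T^\pi]$.

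I would define $\pi'$ state by state. In any state $(i,j)$ where $\pi$ leaves a server idle despite an eligible job, i.e.,
$$\pi_I(i,j) + \pi_E(i,j) < \min\bigl(k,\, i + k\cdot\mathds{1}\{j > 0\}\bigr),$$
reassign each idle server to an eligible job, first saturating any unserved inelastic jobs (up to $\min(i,k)$) and then filling any residual capacity with elastic jobs. In every state where $\pi$ is already non-idling, take $\pi' = \pi$. Since idle capacity is only added and never redirected away from a job already being served, by construction $\pi'$ is non-idling and
$$\pi'_I(i,j) \geq \pi_I(i,j), \qquad \pi'_E(i,j) \geq \pi_E(i,j) \qquad \forall (i,j) \in \mathbb{Z}_{\geq 0}^2.$$
In particular, the state-wise total departure rate of $\pi'$ dominates that of $\pi$:
$$\pi'_I(i,j)\mu_I + \pi'_E(i,j)\mu_E \;\geq\; \pi_I(i,j)\mu_I + \pi_E(i,j)\mu_E.$$

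To lift this per-state dominance to a stationary comparison of mean number of jobs, I would couple the two CTMCs by uniformization on a shared rate-$\Lambda$ Poisson clock with $\Lambda$ dominating the total outgoing rate of every state in either chain. Arrival ticks are identified across the two chains (both increment identically), while completion ticks are thinned according to each chain's current allocation. The target invariant is $N^{\pi'}(t) \leq N^\pi(t)$ almost surely for all $t \geq 0$. Because the chains are both ergodic by Appendix~\ref{sec:stable}, pathwise dominance transfers to $\mathbb{E}[N^{\pi'}] \leq \mathbb{E}[N^\pi]$, and Little's Law closes the argument.

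The main obstacle is that the pointwise allocation dominance compares $\pi$ and $\pi'$ at the \emph{same} state, whereas during the coupling the two chains typically sit in \emph{different} states, so a naive slot-by-slot coupling can temporarily violate coordinate-wise comparisons such as $N^{\pi'}_E \leq N^\pi_E$ (e.g., when $\pi'$ devotes extra capacity to clearing inelastic work). The remedy is to couple only the scalar total-count process $N = N_I + N_E$ and design the thinning so that each completion tick either decrements $N^\pi$ while leaving $N^{\pi'}$ unchanged (the invariant only strengthens) or produces a matched decrement of $N^\pi$ whenever it decrements $N^{\pi'}$, using the state-wise allocation dominance together with the induction hypothesis. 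Carrying out this case analysis on the relative states and the firing slot is the technical crux; once set up it is routine, and the resulting pathwise coupling delivers the desired inequality.
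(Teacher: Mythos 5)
Your per-state construction of $\pi'$ and the resulting allocation dominance $\pi'_I(i,j)\ge\pi_I(i,j)$, $\pi'_E(i,j)\ge\pi_E(i,j)$ are fine, but the coupling step that is supposed to deliver $N^{\pi'}(t)\le N^{\pi}(t)$ does not go through, and you have located the failure point yourself without repairing it. The per-state dominance orders the total departure rates of $\pi$ and $\pi'$ only when both chains occupy the \emph{same} state $(i,j)$. Once the two coupled chains diverge---which happens at the first unmatched completion---they sit in states with different compositions, and nothing orders their total departure rates any more. Concretely, take $k=2$ and $\mu_E\gg\mu_I$: if the $\pi'$ chain is in state $(2,0)$ its total departure rate is $2\mu_I$, while the $\pi$ chain in state $(0,2)$ (same total count $N=2$) can have departure rate $2\mu_E\gg 2\mu_I$. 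At such a moment a completion tick fires for $\pi$ with far higher probability than for $\pi'$, so no thinning of the shared clock can guarantee that every decrement of $N^{\pi}$ is matched by a decrement of $N^{\pi'}$, and the invariant $N^{\pi'}\le N^{\pi}$ breaks in exactly the case you flagged ($N^{\pi'}=N^{\pi}$ with different compositions). This is not a routine case analysis to be filled in; it is the same obstruction that forces the paper to develop the precedence-relation and work-based sample-path machinery of Sections \ref{sec:equal_rates} and \ref{sec:geq_rates} when comparing policies whose chains drift apart. A scalar coupling of $N$ alone cannot work because the departure rate is not a function of $N$.

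The paper's own proof sidesteps Markov-chain coupling entirely: it fixes the whole arrival sequence \emph{and all realized job sizes}, and performs an interchange on the resulting schedule. At the first instant $\pi$ unnecessarily idles a server, the idle capacity is handed to an eligible job; no job's allocation is ever reduced, so with deterministic realized sizes no completion time increases and at least one decreases. Iterating pushes the earliest idle instant later and later until the schedule coincides with that of $\pi'$, giving the response-time comparison sample path by sample path, after which taking expectations yields $\mathbb{E}[T^{\pi'}]\le\mathbb{E}[T^{\pi}]$. To salvage your route you would need to replace the scalar coupling with an argument of this deterministic-interchange type, or establish a genuine precedence or monotonicity relation between the two chains; the uniformization coupling as described cannot deliver the claimed pathwise (or even distributional) dominance of $N$.
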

\begin{proof}
    Consider any policy $\pi$ which idles servers unnecessarily in one or more states.
    We will construct a new policy, $\pi'$, which is identical to $\pi$ in every state where $\pi$ does not idle servers unnecessarily.
    In each state $(i,j)$ where $\pi$ \emph{does} idle server unnecessarily, if $j>0$, $\pi'$ will allocate all of $\pi$'s idle servers to the elastic job with the earliest arrival time.
    If $j=0$, $\pi'$ will instead allocate $\pi$'s idle servers to each unserved (or underserved) inelastic job in FCFS order.

    We now compare the performance of $\pi$ to $\pi'$ on any fixed arrival sequence of elastic and inelastic jobs.
    Suppose $\pi$ first unnecessarily idles servers at time $t$, and suppose $\pi$ gives jobs constant allocations on the time interval $(t, t + \delta]$.
    We reallocate the idle servers during this time interval in order to match the allocations $\pi'$ would make.
    No job received fewer servers as a result of this transformation, and at least one job received additional servers during $(t, t + \delta]$.
    Each job which received  additional servers during $(t, t + \delta]$ had its response time decreased, and no jobs had their response time increased.
    Furthermore, after this interchange, the schedule now reflects the allocation decisions that $\pi'$ would make.

    We now proceed to the next time, $t'$, in the schedule where there are unnecessarily idle servers.
    Note, this idle space may exist because it is part of the policy $\pi$, or because an earlier interchange caused a job to complete earlier, creating some idle servers at time $t'$.
    In either case, we simply perform the same interchange as before, decreasing the response time of some jobs without increasing the response time of any jobs.
    
    Note that each interchange causes the earliest occurrence of unnecessary idle servers in the schedule to occur at a later time.
    We therefore iterate this argument until all idle time either vanishes or occurs after the completion of the last job in the arrival sequence.
    At this point, the schedule reflects the actions taken by $\pi'$, and hence the mean response time under $\pi'$ is no larger than the mean response time under $\pi$.
    Hence, given any optimal idling policy $\pi$, we can construct a non-idling policy $\pi'$ which also optimal.
\end{proof}

\section{Lyapunov Stability of Work Conserving Policies}
\label{sec:stable}
\begin{theorem}
For any work-conserving policy $\pi$, the associated Markov chain $\{(N^\pi_I(t), N^\pi_E(t))\colon t\ge 0\}$ has a stationary distribution.
If we define $(N^\pi_I, N^\pi_E)$ to be a random element that follows this stationary distribution, then
\begin{equation}\lim_{t\rightarrow \infty} (N^\pi_I(t), N^\pi_E(t)) \overset{d}{=} (N^\pi_I, N^\pi_E).
\label{eq:conv-distr}
\end{equation}
Furthermore,
\begin{equation}
\lim_{t \rightarrow \infty} \mathbb{E}[N^\pi_I(t)] = \mathbb{E}[N^\pi_I],
\label{eq:conv-E-I}
\end{equation}
and
\begin{equation}
\lim_{t \rightarrow \infty} \mathbb{E}[N^\pi_E(t)] = \mathbb{E}[N^\pi_E].
\label{eq:conv-E-E}
\end{equation}
\end{theorem}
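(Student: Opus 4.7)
The plan is to use the Foster--Lyapunov framework for CTMCs. All three claims reduce to (i) showing that the chain under $\pi$ is irreducible and positive recurrent so that a stationary distribution $\pi^\ast$ exists and convergence in distribution holds, and (ii) establishing uniform integrability of the first-coordinate marginals so that weak convergence upgrades to convergence of expectations.

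For Step (i), let $W(i,j) = i/\mu_I + j/\mu_E$ and take the linear Lyapunov function $V_1(i,j) = W(i,j)$. A direct computation of the infinitesimal generator $Q$ applied to $V_1$ collapses nicely, because the elastic and inelastic departure contributions are scaled by exactly $1/\mu_E$ and $1/\mu_I$ respectively, giving
\[
QV_1(i,j) \;=\; \frac{\lambda_I}{\mu_I} + \frac{\lambda_E}{\mu_E} - \bigl(\pi_I(i,j) + \pi_E(i,j)\bigr) \;=\; k\rho - \bigl(\pi_I(i,j) + \pi_E(i,j)\bigr).
\]
Work-conservation forces $\pi_I(i,j) + \pi_E(i,j) = k$ whenever $j > 0$ or $i \geq k$, so $QV_1 \leq -k(1-\rho) < 0$ outside the finite set $C_0 = \{(i,0) : 0 \leq i < k\}$, and $QV_1$ is bounded on $C_0$. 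Irreducibility follows from the fact that every non-empty state has a positive-rate transition to a state with strictly less total work (since $\pi$ is work-conserving), so $(0,0)$ is reachable from every state, while Poisson arrivals let $(0,0)$ reach every state. Foster--Lyapunov then gives positive recurrence, hence a unique stationary distribution $\pi^\ast$, and by the standard convergence theorem for ergodic CTMCs (continuous time makes aperiodicity automatic), $(N^\pi_I(t), N^\pi_E(t)) \Rightarrow (N^\pi_I, N^\pi_E)$, proving \eqref{eq:conv-distr}.

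For Step (ii), I would use the exponential Lyapunov function $V_2(i,j) = e^{\eta W(i,j)}$ for a sufficiently small $\eta > 0$. Factoring out $V_2(i,j)$ and expanding the exponentials to second order yields
\[
QV_2(i,j) \;=\; V_2(i,j) \cdot \Bigl( \eta \bigl[k\rho - (\pi_I(i,j) + \pi_E(i,j))\bigr] + O(\eta^2) \Bigr),
\]
where the $O(\eta^2)$ term is uniformly bounded in the state because $\pi_I, \pi_E \in [0,k]$. Choosing $\eta$ small enough that the quadratic correction does not swamp the $-\eta k(1-\rho)$ first-order term on the region $\{j > 0\} \cup \{i \geq k\}$, we obtain the proportional drift inequality $QV_2 \leq -c V_2 + b \mathds{1}_{C_0}$ for constants $c, b > 0$. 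Dynkin's formula together with Gr\"onwall's inequality then gives
\[
\sup_{t \geq 0} \mathbb{E}\bigl[V_2(N^\pi_I(t), N^\pi_E(t))\bigr] \;\leq\; V_2(N^\pi_I(0), N^\pi_E(0)) + b/c \;<\; \infty.
\]
Since $V_2$ dominates any polynomial in $i + j$, we have in particular $\sup_t \mathbb{E}[(N^\pi_I(t))^2] < \infty$ and $\sup_t \mathbb{E}[(N^\pi_E(t))^2] < \infty$, which yields uniform integrability of $\{N^\pi_I(t)\}_{t \geq 0}$ and $\{N^\pi_E(t)\}_{t \geq 0}$. Combined with the weak convergence from Step (i), uniform integrability upgrades to $L^1$ convergence, giving \eqref{eq:conv-E-I} and \eqref{eq:conv-E-E}.

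The main obstacle is the proportional drift verification for $V_2$: one must carefully control the second-order expansion of $e^{\pm \eta/\mu_I}$ and $e^{\pm \eta/\mu_E}$ uniformly across states and choose $\eta$ small enough that the negative first-order drift dominates on the co-finite ``good'' region. A related technical point is justifying the use of Dynkin's formula for an unbounded $V_2$, which can be handled by first applying the drift inequality to truncations $V_2 \wedge M$ and passing to the limit via monotone convergence, using the fact that the jump rates of the chain are linearly bounded in the state.
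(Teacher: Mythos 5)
Your proposal is correct and follows essentially the same route as the paper: a linear Lyapunov function proportional to $i/\mu_I + j/\mu_E$, whose drift under work-conservation is $k\rho - k = -k(1-\rho)$ outside a finite set, yields positive recurrence and weak convergence via Foster--Lyapunov, and a uniform-in-time moment bound then gives uniform integrability and hence convergence of expectations. The only divergence is in the second step, where the paper obtains $\sup_{t}\mathbb{E}\left[V(N^\pi(t))^2\right]<\infty$ by citing Hajek's drift theorem (extended to continuous time via uniformization), while you re-derive the same bound self-containedly through the exponential Lyapunov function $e^{\eta V}$ --- which is precisely the mechanism underlying Hajek's result, so the two arguments coincide in substance.
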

\begin{proof}
To prove this claim, it suffices to show the drift results below, which allows us to apply the Foster-Lyapunov theorem \cite{SriYin_14} to show the convergence in distribution in \eqref{eq:conv-distr} and apply the bounds in \cite{Haj_82} to show the convergence of expectations in \eqref{eq:conv-E-I} and \eqref{eq:conv-E-E}.

Consider the following Lyapunov function $V : \mathbb{Z}^2_{\geq 0} \rightarrow \mathbb{R}_{\ge 0}$ for the Markov chain $\{(N^\pi_I(t), N^\pi_E(t))\colon t\ge 0\}$:
$$V(i,j) = \frac{i}{k\mu_I} + \frac{j}{k\mu_E}.$$  Then its drift $\Delta V(i,j)$ can be written as
$$\Delta V(i,j) = \sum_{(i',j')} r_{(i,j)\rightarrow(i',j')} (V(i',j') - V(i,j)),$$
where $r_{(i,j)\rightarrow(i',j')}$ is the rate of transition from state $(i,j)$ to state $(i',j')$. Note that for any $(i,j)$ and $(i',j')$,
$$|V(i',j')-V(i,j)|<\frac{1}{k\min\{\mu_I,\mu_E\}}.$$

We now show that for the finite set, $F=\{(i,j)\colon i+j\le k\}$, we have
$$\Delta V(i,j) \leq -\epsilon \qquad \forall (i,j) \notin F$$
for some $\epsilon > 0$.  Let $(i,j)$ be any state not in $F$, i.e., $i+j>k$.  By definition,
$$\Delta V(i,j) = \frac{\lambda_I}{k\mu_I} + \frac{\lambda_E}{k\mu_E} - \left(\frac{\pi_I(i,j) \mu_I}{k\mu_I} + \frac{\pi_E(i,j) \mu_E}{k\mu_E}\right).$$
Because $\pi$ is assumed to be a work conserving policy, and there are at least $k$ jobs in system, we know that
$$\pi_I(i,j) + \pi_E(i,j) = k.$$
Furthermore, we have assumed that 
$$\rho = \frac{\lambda_I}{k\mu_I} + \frac{\lambda_E}{k\mu_E} = 1-\epsilon < 1$$
for some $\epsilon > 0$.
Hence, we have that 
$$\Delta V(i,j) = \rho - 1 = -\epsilon$$
as desired.
We can therefore conclude that the Markov chain induced by $\pi$ is positive recurrent and the convergence in distribution in \eqref{eq:conv-distr} follows.

Note that for any $(i,j)\in F$, $V(i,j)\ge \frac{1}{\max\{\mu_I,\mu_E\}}$.  Then extending Theorem 2.3 of \cite{Haj_82} to continuous-time Markov chains using uniformization implies that
\begin{equation*}
\sup_{t\ge 0}\mathbb{E}[(V(N^{\pi}(t)))^2]<\infty.
\end{equation*}
Therefore, $\{N_I^{\pi}(t),t\ge 0\}$ and $\{N_E^{\pi}(t),t\ge 0\}$ are uniformly integrable, which implies the convergence of expectations in \eqref{eq:conv-E-I} and \eqref{eq:conv-E-E}.
\end{proof}

\section{Markov Chains for \Policy{}}
\label{sec:ifctmc}
We present the Markov chain for \Policy{} in Figure \ref{fig:inelastic_full}.
To analyze this chain we will apply a busy period transformation analogous to the method used in Section \ref{sec:chain_conversion}.

First, we note that the inelastic jobs under \Policy{} see an $M/M/k$ queueing system, and hence their mean response time is known.
We therefore only need to consider the mean response time of elastic jobs under \Policy{}.
When there are more than $k$ inelastic jobs in the system under \Policy{}, elastic jobs receive no service.
The amount of time from when there are first $k$ inelastic jobs in the system until there are $k-1$ inelastic jobs in the system is exactly an $M/M/1$ busy period.
Hence, we perform the same busy period transformation described in Section \ref{sec:chain_conversion} to the Markov chain for \Policy{}.
This results in a 1D-infinite Markov chain which we can analyze using matrix analytic methods.

We depict the busy period transformation for \Policy{} in Figure \ref{fig:inelastic_busy}.
We then show the busy period states replaced with Coxian distributions in Figure \ref{fig:inelastic_1d}.

\begin{figure}[hb!]
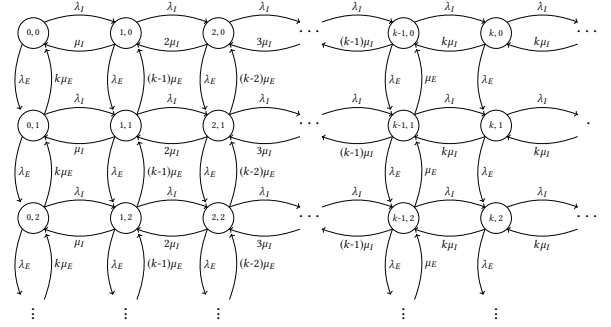
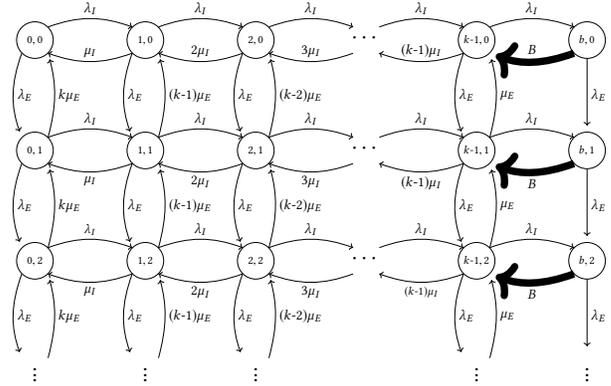
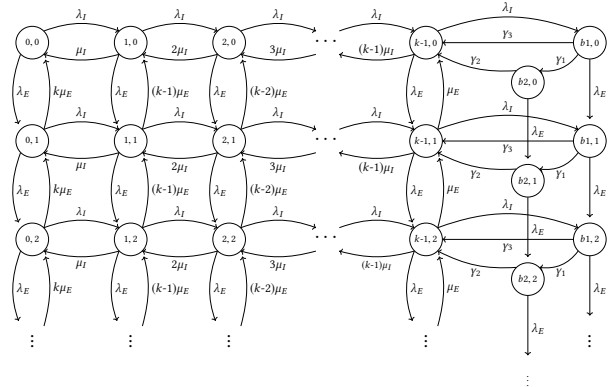

    \subfloat[Full \Policy{} chain]{
    \includestandalone[mode=buildnew,width=.45\textwidth]{figures/inelastic_chain}
        \label{fig:inelastic_full}
    }\\
    \subfloat[\Policy{} chain with special states]{
        \includestandalone[mode=buildnew,width=.45\textwidth]{figures/arrow_i_busy}
        \label{fig:inelastic_busy}
    }\\
    \subfloat[Final 1D \Policy{} chain]{
        \includestandalone[mode=buildnew,width=.45\textwidth]{figures/inelastic_chain_busy}
        \label{fig:inelastic_1d}
    }\\

    \caption{The transformation of the 2D-infinite \Policy{} chain to a 1D-infinite chain via the busy period transformation.  Special states representing an $M/M/1$ busy period are shown in (b), and these busy periods are approximated by a Coxian distribution in (c).
    }
    \label{fig:i_chain_full}
\end{figure}

\clearpage
\bibliography{references}
\bibliographystyle{plain}

\end{document}